\documentclass[11pt]{article}%
\usepackage{amsfonts}
\usepackage{color}
\usepackage{amsmath}
\usepackage{fullpage}
\usepackage{amssymb}
\usepackage{graphicx}
\usepackage{hyperref}%
\setcounter{MaxMatrixCols}{30}
\providecommand{\U}[1]{\protect\rule{.1in}{.1in}}
\newtheorem{theorem}{Theorem}

\newtheorem{corollary}[theorem]{Corollary}

\newtheorem{lemma}[theorem]{Lemma}

\newtheorem{proposition}[theorem]{Proposition}

\newenvironment{proof}[1][Proof]{\noindent\textbf{#1.} }{\ \rule{0.5em}{0.5em}}
\numberwithin{equation}{section}

\let\originalleft\left
\let\originalright\right
\renewcommand{\left}{\mathopen{}\mathclose\bgroup\originalleft}
\renewcommand{\right}{\aftergroup\egroup\originalright}

\begin{document}

\title{Multiplicativity of completely bounded $p$-norms implies a strong converse for
entanglement-assisted capacity}
\author{Manish K. Gupta\\\textit{{\small {Hearne Institute for Theoretical Physics,}}} \\\textit{{\small {Department of Physics and Astronomy,}}} \\\textit{{\small {Louisiana State University,}}} \\\textit{{\small {Baton Rouge, Louisiana 70803, USA}}}
\and Mark M. Wilde\\\textit{{\small {Hearne Institute for Theoretical Physics,}}} \\\textit{{\small {Department of Physics and Astronomy,}}} \\\textit{{\small {Center for Computation and Technology,}}} \\\textit{{\small {Louisiana State University,}}} \\\textit{{\small {Baton Rouge, Louisiana 70803, USA}}} }
\date{\today}
\maketitle

\begin{abstract}
The fully quantum reverse Shannon theorem establishes the optimal rate of
noiseless classical communication required for simulating the action of many
instances of a noisy quantum channel on an arbitrary input state, while also
allowing for an arbitrary amount of shared entanglement of an arbitrary form.
Turning this theorem around establishes a strong converse for the
entanglement-assisted classical capacity of any quantum channel. This paper
proves the strong converse for entanglement-assisted capacity by a completely
different approach and identifies a bound on the strong converse exponent for
this task. Namely, we exploit the recent entanglement-assisted
\textquotedblleft meta-converse\textquotedblright\ theorem of Matthews and
Wehner, several properties of the recently established sandwiched R\'{e}nyi
relative entropy (also referred to as the quantum R\'{e}nyi divergence), and
the multiplicativity of completely bounded $p$-norms due to\ Devetak
\textit{et al}. The proof here demonstrates the extent to which the Arimoto
approach can be helpful in proving strong converse theorems, it provides an
operational relevance for the multiplicativity result of Devetak \textit{et
al}., and it adds to the growing body of evidence that the sandwiched
R\'{e}nyi relative entropy is the correct quantum generalization of the
classical concept for all $\alpha>1$.

\end{abstract}

\section{Introduction}

An important lesson learned in theoretical quantum information science is that
entanglement assistance tends to simplify problems of interest and, perhaps
surprisingly, makes such problems more like their classical counterparts. For
example, in quantum computational complexity theory, the canonical
QIP-complete problem is distinguishing two quantum channels specified by
quantum circuits \cite{RW05}. In order to solve this problem, two parties,
traditionally called a prover and verifier, engage in an entanglement-assisted
discrimination strategy, and furthermore, it is known that the computational
complexity of this task does not increase if these parties engage in a
completely classical strategy to solve this problem \cite{jain2010qip}. In the
theory of quantum error correction, the entanglement-assisted stabilizer
formalism allows for producing a quantum error-correcting code from an
arbitrary classical error-correcting code, while retaining the desirable
properties of the imported classical code \cite{BDH06,arx2006brun}. However,
importing arbitrary classical codes is not possible if entanglement assistance
is not available. Furthermore, entanglement assistance helps to resolve
technical problems that arise in the construction of quantum LDPC
\cite{HBD09,HYH09}\ and turbo codes \cite{WHB13}. In quantum rate distortion
theory (the theory of lossy quantum data compression), the most well
understood setting is again the entanglement-assisted setting, with there
being a simple formula that characterizes optimal compression rates
\cite{DHW11}.

Perhaps the earliest observation in this spirit is due to Bennett \textit{et
al}.~\cite{PhysRevLett.83.3081,Bennett2002} and Holevo \cite{Hol01a}, who
established that a simple formula characterizes the capacity of a quantum
channel for classical communication when unlimited entanglement assistance is
available. This result is one of the strongest in quantum Shannon theory and
provides a \textquotedblleft fully quantum\textquotedblright\ generalization
of Shannon's well known formula for the classical capacity of a classical
channel \cite{S48}. Furthermore, this formula is robust under the presence of
a noiseless quantum feedback channel from receiver to sender---Bowen
established that the entanglement-assisted capacity does not increase in the
presence of such a quantum feedback channel~\cite{B04}.

In later work, Bennett \textit{et al}.~\cite{BDHSW12} and Berta \textit{et
al}.~\cite{BCR09} strengthened the interpretation of the formula for
entanglement-assisted capacity, by demonstrating that a so-called strong
converse theorem holds in this setting. A strong converse theorem establishes
that, if the rate of communication in any given coding scheme exceeds the
capacity, then the error probability of this scheme tends to one in the limit
of many channel uses. Coupled with the achievability part of a coding theorem
(that there always exist a coding scheme with error probability tending to
zero in the limit of many channel uses if the rate of communication is less
than capacity), a strong converse theorem establishes the capacity as a very
sharp line dividing achievable communication rates from
unachievable ones. Furthermore, strong converse theorems find applications in
establishing security in particular models of cryptography \cite{KWW12}.

Bennett \textit{et al}.~and Berta \textit{et al}.~established a strong
converse theorem for entanglement-assisted capacity by proving what is known
as the entanglement-assisted quantum reverse Shannon theorem. Such a theorem
corresponds to a compression-like quantum information processing task and
characterizes the optimal rates of communication at which it is possible to
simulate a quantum channel. In entanglement-assisted channel simulation, the
goal is for a sender and receiver, who share an unlimited amount of
entanglement before the protocol begins, to use as few noiseless classical bit
channels as possible to simulate the action of many independent instances of
the channel on any quantum input (this input can be entangled with another
system not fed into the channel), in such a way that any third party should
not be able to distinguish between the original channels and the simulation.
Interestingly, the rate of communication required for channel simulation
corresponds to a strong converse rate for capacity, because, if it were
possible to use the channel to communicate at a rate larger than its channel
simulation rate, then a sender and receiver could \textquotedblleft get
out\textquotedblright\ more communication than they invested originally
(essentially getting \textquotedblleft something for nothing, bits for
free\textquotedblright). Carrying this \textit{reductio ad absurdum} argument
out in more detail, one can show that the error probability in fact increases
exponentially fast to one if the rate of communication exceeds the channel
simulation rate. The main result of Bennett \textit{et al}.~and Berta
\textit{et al}.~is that the optimal channel simulation rate in the presence of
shared entanglement is equal to the entanglement-assisted capacity of the
channel, and by the above argument, this establishes a strong converse theorem
for the entanglement-assisted capacity.

\section{Summary of results}

In this paper, we establish a strong converse theorem for the
entanglement-assisted capacity by a route completely different from that of
Bennett \textit{et al}.~and Berta \textit{et al}. Furthermore, we identify a
bound on the strong converse exponent for this task. Our motivation is
two-fold: first, for a theorem as important as this one, it is certainly
reasonable to have multiple proofs to shed further light on the topic. More
importantly, our approach here might illuminate alternate ways for
establishing a strong converse theorem for other capacities, the most pressing
of which is the quantum capacity of degradable channels \cite{MW13}. Our
approach taken here is in the line of Arimoto \cite{A73}, a route by which
several strong converse theorems have now been established
\cite{ON99,KW09,WWY13} and for which the general framework has been extended
significantly \cite{PV10,SW12}. Our proof makes use of the sandwiched
R\'{e}nyi relative entropy \cite{MDSFT13,WWY13}, known also by the name of
\textquotedblleft quantum R\'{e}nyi divergence\textquotedblright%
\ \cite{MDSFT13}, and it exploits several properties of this entropy in
order to establish the strong converse theorem.

A pleasing aspect of the present paper is that it provides an operational
relevance for the main result of Devetak, Junge, King, and Ruskai
\cite{DJKR06} (see related follow up work in \cite{J06}). Indeed, in the
present paper, for simplicity, we will \textit{define\footnote{Note that
\cite{DJKR06} defined the completely bounded norms $\left\Vert \cdot
\right\Vert _{\text{CB},p\rightarrow q}$ in a very general way and proved that
these norms reduce to the expression in (\ref{eq:CB-norm})\ when $p=1$ and
$q=\alpha$.}} the completely bounded $1\rightarrow\alpha$ norm of a completely
positive map $\mathcal{M}$ as%
\begin{equation}
\left\Vert \mathcal{M}\right\Vert _{\text{CB},1\rightarrow\alpha}\equiv
\max_{\rho_{A}}\left\Vert \left(  \rho_{A}^{1/2\alpha}\otimes I_{B}\right)
\Gamma_{AB}^{\mathcal{M}}\left(  \rho_{A}^{1/2\alpha}\otimes I_{B}\right)
\right\Vert _{\alpha}, \label{eq:CB-norm}%
\end{equation}
where $\Gamma_{AB}^{\mathcal{M}}$ is the Choi matrix of the channel
$\mathcal{M}$, $\left\Vert \cdot\right\Vert _{\alpha}$ denotes the Schatten
$\alpha$-norm for $\alpha\geq1$, and the optimization is over density
operators $\rho_{A}$ (see the next section for formal definitions of these
objects). A special case of the main result of \cite{DJKR06}\ is that these
norms are multiplicative for all completely positive maps $\mathcal{M}_{1}$
and $\mathcal{M}_{2}$, in the sense that%
\begin{equation}
\left\Vert \mathcal{M}_{1}\otimes\mathcal{M}_{2}\right\Vert _{\text{CB}%
,1\rightarrow\alpha}=\left\Vert \mathcal{M}_{1}\right\Vert _{\text{CB}%
,1\rightarrow\alpha}\left\Vert \mathcal{M}_{2}\right\Vert _{\text{CB}%
,1\rightarrow\alpha}.
\end{equation}
Due to the connection between the sandwiched R\'{e}nyi relative entropy and
$\alpha$-norms, we can apply the above multiplicativity result to our setting
in order to establish a strong converse for entanglement-assisted capacity.
Furthermore, it is interesting to observe that the completely bounded $\alpha
$-norm in (\ref{eq:CB-norm}) converges to the so-called \textquotedblleft
diamond norm\textquotedblright\ \cite{K97} in the limit as $\alpha
\rightarrow1$. The diamond norm is used all throughout quantum information
theory as a measure of distance between quantum
channels\ \cite{AKN98,GLN05,RW05,S05,W12}\ because it is the operationally
relevant distance measure in the setting of entanglement-assisted
discrimination of quantum channels (the most general strategy that one could
use to distinguish quantum channels). Of course, this is the setting with
which one would be dealing if trying to determine the value of a single bit
encoded with an entanglement-assisted communication strategy, so it appears
that our connection of the main result of \cite{DJKR06} to
entanglement-assisted capacity is the natural one to make.\footnote{The
authors of \cite{DJKR06} connected their main technical result to additivity
of a quantity now known as reverse coherent information \cite{GPLS09}.
However, in spite of the statements made in \cite{GPLS09}, we are not
convinced that the reverse coherent information possesses a compelling
operational interpretation.}

We now outline our proof for the strong converse of the entanglement-assisted
classical capacity, and the following sections give detailed arguments.

\begin{enumerate}
\item We say that a quantity is a generalized divergence \cite{PV10,SW12} (a
generalization of von Neumann relative entropy) if it satisfies the following
monotonicity inequality for all density operators $\rho$ and $\sigma$ and
channels $\mathcal{N}$:%
\begin{equation}
\mathbf{D}\left(  \rho\Vert\sigma\right)  \geq\mathbf{D}\left(  \mathcal{N}%
\left(  \rho\right)  \Vert\mathcal{N}\left(  \sigma\right)  \right)  .
\end{equation}
From such a divergence, we can derive a generalized mutual information of a
quantum channel according to the following recipe:%
\begin{align}
I_{\mathbf{D}}\left(  \mathcal{N}\right)   &  \equiv\max_{\rho_{A}%
}I_{\mathbf{D}}\left(  A;B\right)  _{\omega}\label{eq:recipe-first}\\
\omega_{AB} &  \equiv\rho_{A}^{1/2}\Gamma_{AB}^{\mathcal{N}}\rho_{A}^{1/2},\\
\Gamma_{AB}^{\mathcal{N}} &  \equiv\mathcal{N}_{A^{\prime}\rightarrow
B}\left(  \Gamma_{AA^{\prime}}\right)  ,\\
\Gamma_{AA^{\prime}} &  \equiv\left\vert \Gamma\right\rangle \left\langle
\Gamma\right\vert _{AA^{\prime}},\\
\left\vert \Gamma\right\rangle _{AA^{\prime}} &  \equiv\sum_{i}\left\vert
i\right\rangle _{A}\left\vert i\right\rangle _{A^{\prime}},\\
I_{\mathbf{D}}\left(  A;B\right)  _{\tau} &  \equiv\min_{\sigma_{B}}%
\mathbf{D}\left(  \tau_{AB}\Vert\tau_{A}\otimes\sigma_{B}\right)
.\label{eq:recipe-last}%
\end{align}
(We explain all of these quantities in further detail in the main text.)\ Our
first step then is to recall \cite[Propositions~20 and 21]{MW12}, which
establish a relationship between Type I and II errors in hypothesis testing
and the rate $R$ and success probability for any $\left(  n,R,\varepsilon
\right)  $ entanglement-assisted code (a code that uses $n$ instances of a
channel $\mathcal{N}$ at a fixed rate $R$ and has an error probability no
larger than~$\varepsilon$). We use this \textquotedblleft meta
converse\textquotedblright\ theorem in order to obtain an upper bound on any
entanglement-assisted code's success probability in terms of its rate,
blocklength, and any generalized mutual information derived from a generalized
divergence $\mathbf{D}\left(  \rho\Vert\sigma\right)  $ as above.

\item Next we recall the definition of the sandwiched R\'{e}nyi relative
entropy \cite{MDSFT13,WWY13}, also referred to as the quantum R\'{e}nyi
divergence \cite{MDSFT13}, which is a particular divergence between two
density operators $\rho$ and $\sigma$. It is defined \cite{MDSFT13,WWY13} for
$\alpha\in(1,\infty)$ as%
\begin{equation}
\widetilde{D}_{\alpha}\left(  \rho\Vert\sigma\right)  \equiv\frac{1}{\alpha
-1}\log\text{Tr}\left\{  \left(  \sigma^{\frac{1-\alpha}{2\alpha}}\rho
\sigma^{\frac{1-\alpha}{2\alpha}}\right)  ^{\alpha}\right\}  ,
\end{equation}
when the support of $\rho$ is contained in the support of $\sigma$ and it is
equal to $+\infty$ otherwise (in this work, we focus exclusively on the regime
$\alpha>1$). In particular, $\widetilde{D}_{\alpha}\left(  \rho\Vert
\sigma\right)  $ was shown to obey the monotonicity inequality mentioned above
for all $\alpha\in(1,2]$ \cite{WWY13,MDSFT13} and later work proved that this
monotonicity holds for all $\alpha\in\lbrack1/2,1)\cup(1,\infty)$
\cite{FL13,B13monotone}. Furthermore, this quantity converges to the von Neumann
relative entropy in the limit as $\alpha\searrow1$. This is one reason why the
sandwiched R\'{e}nyi relative entropy is relevant for us in establishing a
strong converse for entanglement-assisted capacity.\newline\newline%
\textit{Remark:} In a few recent works, it has been said, somewhat
ambiguously, that this new R\'{e}nyi relative entropy is useful because it
captures the \textquotedblleft non-commutativity of quantum
states.\textquotedblright\ However, the previous notion of R\'{e}nyi relative
entropy, $D_{\alpha}\left(  \rho\Vert\sigma\right)  \equiv\frac{1}{\alpha
-1}\log$Tr$\left\{  \rho^{\alpha}\sigma^{1-\alpha}\right\}  $ is perfectly
well defined for non-commutative quantum states and proves to be useful in the
regime when $\alpha\in\lbrack0,1)$
\cite{PhysRevLett.98.160501,PhysRevA.76.062301,N06,audenaert2008asymptotic,nussbaum2009chernoff,MH11,MO13}%
. In our opinion, the sandwiched R\'{e}nyi relative entropy has proved useful
when $\alpha>1$ because it can be related to a Schatten $\alpha$-norm in the
following way:%
\begin{equation}
\widetilde{D}_{\alpha}\left(  \rho\Vert\sigma\right)  =\frac{\alpha}{\alpha
-1}\log\left\Vert \sigma^{\frac{1-\alpha}{2\alpha}}\rho\sigma^{\frac{1-\alpha
}{2\alpha}}\right\Vert _{\alpha},
\end{equation}
and there are many properties of these $\alpha$-norms and prior results
established for them that come into play when establishing properties of
information measures derived from the sandwiched R\'{e}nyi relative entropy
(the present paper being no exception). Thus, there is a growing consensus
\cite{WWY13,MDSFT13,MO13,DL13} that the sandwiched R\'{e}nyi relative entropy
is the correct generalization of the classical R\'{e}nyi relative entropy at
least for the regime $\alpha>1$.

\item We evaluate the upper bound on success probability mentioned in the
first step above, by using the sandwiched R\'{e}nyi relative entropy as the
divergence. This yields the following upper bound on the success probability
of any rate $R$ entanglement-assisted scheme that uses a channel $n$ times:%
\begin{equation}
p_{\text{succ}}\leq2^{-n\sup_{\alpha>1}\left(  \frac{\alpha-1}{\alpha}\right)
\left(  R-\frac{1}{n}\widetilde{I}_{\alpha}\left(  \mathcal{N}^{\otimes
n}\right)  \right)  },
\end{equation}
where $\widetilde{I}_{\alpha}\left(  \mathcal{M}\right)  $ is the sandwiched
R\'{e}nyi mutual information of a quantum channel $\mathcal{M}$, derived by
the same recipe in (\ref{eq:recipe-first})-(\ref{eq:recipe-last}), taking
$\mathbf{D}=\widetilde{D}_{\alpha}$. By inspecting the above formula, we can
observe that if additivity of $\widetilde{I}_{\alpha}$ holds for $\alpha
\in(1,\infty)$, i.e.,%
\begin{equation}
\widetilde{I}_{\alpha}\left(  \mathcal{N}^{\otimes n}\right)  =n\widetilde
{I}_{\alpha}\left(  \mathcal{N}\right)  ,\label{eq:additivity-for-n}%
\end{equation}
then by a standard argument \cite{ON99,KW09}, which we elaborate for our case
here, the strong converse follows.

\item As a precursor to proving additivity of $\widetilde{I}_{\alpha}\left(
\mathcal{N}^{\otimes n}\right)  $, we relate the sandwiched R\'{e}nyi mutual information
of a channel $\mathcal{N}$ to an $\alpha$-norm of the states involved%
\begin{align}
\widetilde{I}_{\alpha}\left(  \mathcal{N}\right)   &  =\max_{\rho_{A}}%
\min_{\sigma_{B}}\widetilde{D}_{\alpha}\left(  \rho_{A}^{1/2}\Gamma
_{AB}^{\mathcal{N}}\rho_{A}^{1/2}\middle\Vert\rho_{A}\otimes\sigma_{B}\right)
\\
&  =\max_{\rho_{A}}\min_{\sigma_{B}}\frac{\alpha}{\alpha-1}\log\left\Vert
\left(  \left[  \Gamma_{AB}^{\mathcal{N}}\right]  ^{1/2}\left(  \rho
_{A}^{\frac{1}{\alpha}}\otimes\sigma_{B}^{\frac{1-\alpha}{\alpha}}\right)
\left[  \Gamma_{AB}^{\mathcal{N}}\right]  ^{1/2}\right)  \right\Vert _{\alpha}%
\end{align}
Using H\"{o}lder duality of norms and the Lieb concavity theorem \cite{L73},
we then show that%
\begin{equation}
\left\Vert \left(  \left[  \Gamma_{AB}^{\mathcal{N}}\right]  ^{1/2}\left(
\rho_{A}^{\frac{1}{\alpha}}\otimes\sigma_{B}^{\frac{1-\alpha}{\alpha}}\right)
\left[  \Gamma_{AB}^{\mathcal{N}}\right]  ^{1/2}\right)  \right\Vert _{\alpha}%
\end{equation}
is concave in $\rho_{A}$ for $\alpha\in(1,\infty)$. Convexity of the $\alpha
$-norm and operator convexity of $x^{\left(  1-\alpha\right)  /\alpha}$ for
$\alpha\in(1,\infty)$ implies that the above function is convex in $\sigma
_{B}$ for $\alpha\in(1,\infty)$. These properties are sufficient for us to
apply the Sion minimax theorem \cite{S58} in order to exchange the minimum
with the maximum for $\alpha\in(1,\infty)$:%
\begin{align}
\widetilde{I}_{\alpha}\left(  \mathcal{N}\right)   &  =\max_{\rho_{A}}%
\min_{\sigma_{B}}\widetilde{D}_{\alpha}\left(  \rho_{A}^{1/2}\Gamma
_{AB}^{\mathcal{N}}\rho_{A}^{1/2}\middle\Vert\rho_{A}\otimes\sigma_{B}\right)
\\
&  =\min_{\sigma_{B}}\max_{\rho_{A}}\widetilde{D}_{\alpha}\left(  \rho
_{A}^{1/2}\Gamma_{AB}^{\mathcal{N}}\rho_{A}^{1/2}\middle\Vert\rho_{A}%
\otimes\sigma_{B}\right)  .
\end{align}

\item From here, we can exploit the multiplicativity of completely bounded
$\alpha$-norms \cite{DJKR06} and \cite[Theorem~11]{B13monotone} to establish
that the sandwiched R\'{e}nyi mutual information is additive as a function of
quantum channels, in the sense that%
\begin{equation}
\widetilde{I}_{\alpha}\left(  \mathcal{N}_{1}\otimes\mathcal{N}_{2}\right)
=\widetilde{I}_{\alpha}\left(  \mathcal{N}_{1}\right)  +\widetilde{I}_{\alpha
}\left(  \mathcal{N}_{2}\right)  ,
\end{equation}
for all quantum channels $\mathcal{N}_{1}$ and $\mathcal{N}_{2}$ and all
$\alpha\in(1,\infty)$. This additivity result along with an inductive argument
gives us the additivity relation in (\ref{eq:additivity-for-n}).

\item Combining the above results, we obtain the following bound on the
success probability for any $(n,R,\varepsilon)$ entanglement-assisted coding
scheme for a channel $\mathcal{N}$:%
\begin{equation}
p_{\text{succ}}\leq2^{-n\sup_{\alpha>1}\left(  \frac{\alpha-1}{\alpha}\right)
\left(  R-\widetilde{I}_{\alpha}\left(  \mathcal{N}\right)  \right)  }.
\end{equation}
Finally, by a standard argument \cite{ON99,SW12} (which we elaborate for our
case here), we can choose $\varepsilon>0$ such that $\widetilde{I}_{\alpha
}(\mathcal{N})<{I}(\mathcal{N})+\varepsilon$ for all $\alpha>1$ in some
neighborhood of 1, so that the success probability decays exponentially fast
to zero with $n$ if $R>I(\mathcal{N})$, where $I(\mathcal{N})$ is the
entanglement-assisted capacity of the channel $\mathcal{N}$ (in this case,
$I(\mathcal{N})$ can be constructed according to the recipe in
(\ref{eq:recipe-first})-(\ref{eq:recipe-last}) with the generalized divergence
taken as the von Neumann relative entropy). The strong converse theorem for
the entanglement-assisted capacity then follows.
\end{enumerate}

The next section reviews some notations and definitions, and the rest of the
paper proceeds in the order above, giving detailed proofs for each step. We
then conclude with a brief summary.

\section{Notation and Definitions}

Let $\mathcal{B}\left(  \mathcal{H}\right)  $ denote the algebra of bounded
linear operators acting on a Hilbert space $\mathcal{H}$. We restrict
ourselves to finite-dimensional Hilbert spaces throughout this paper. The
$\alpha$-norm of an operator $X$ is defined as%
\begin{equation}
\left\Vert X\right\Vert _{\alpha}\equiv\text{Tr}\{(\sqrt{X^{\dag}X})^{\alpha
}\}^{1/\alpha},
\end{equation}
where $\alpha\geq1$. Let $\mathcal{B}\left(  \mathcal{H}\right)  _{+}$ denote
the subset of positive semi-definite operators (we often simply say that an
operator is \textquotedblleft positive\textquotedblright\ if it is positive
semi-definite). We also write $X\geq0$ if $X\in\mathcal{B}\left(
\mathcal{H}\right)  _{+}$. An\ operator $\rho$ is in the set $\mathcal{S}%
\left(  \mathcal{H}\right)  $\ of density operators if $\rho\in\mathcal{B}%
\left(  \mathcal{H}\right)  _{+}$ and Tr$\left\{  \rho\right\}  =1$. The
tensor product of two Hilbert spaces $\mathcal{H}_{A}$ and $\mathcal{H}_{B}$
is denoted by $\mathcal{H}_{A}\otimes\mathcal{H}_{B}$.\ Given a multipartite
density operator $\rho_{AB}\in\mathcal{S}\left(  \mathcal{H}_{A}%
\otimes\mathcal{H}_{B}\right)  $, we unambiguously write $\rho_{A}=\ $%
Tr$_{B}\left\{  \rho_{AB}\right\}  $ for the reduced density operator on
system $A$. A linear map $\mathcal{N}_{A\rightarrow B}:\mathcal{B}\left(
\mathcal{H}_{A}\right)  \rightarrow\mathcal{B}\left(  \mathcal{H}_{B}\right)
$\ is positive if $\mathcal{N}_{A\rightarrow B}\left(  \sigma_{A}\right)
\in\mathcal{B}\left(  \mathcal{H}_{B}\right)  _{+}$ whenever $\sigma_{A}%
\in\mathcal{B}\left(  \mathcal{H}_{A}\right)  _{+}$. Let id$_{A}$ denote the
identity map acting on a system $A$. A linear map $\mathcal{N}_{A\rightarrow
B}$ is completely positive if the map id$_{R}\otimes\mathcal{N}_{A\rightarrow
B}$ is positive for a reference system $R$ of arbitrary size. A linear map
$\mathcal{N}_{A\rightarrow B}$ is trace-preserving if Tr$\left\{
\mathcal{N}_{A\rightarrow B}\left(  \tau_{A}\right)  \right\}  =\ $Tr$\left\{
\tau_{A}\right\}  $ for all input operators $\tau_{A}\in\mathcal{B}\left(
\mathcal{H}_{A}\right)  $. If a linear map is completely positive and
trace-preserving (CPTP), we say that it is a quantum channel or quantum
operation. A positive operator-valued measure (POVM) is a set $\left\{
\Lambda^{m}\right\}  $ of positive operators such that $\sum_{m}\Lambda^{m}=I$.

The sandwiched R\'{e}nyi relative entropy \cite{MDSFT13,WWY13}, also referred
to as the quantum R\'{e}nyi divergence \cite{MDSFT13}, between two density
operators $\rho$ and $\sigma$ is defined for $\alpha\in(1,\infty)$\ as
follows:%
\begin{equation}
\widetilde{D}_{\alpha}\left(  \rho\Vert\sigma\right)  \equiv\frac{1}{\alpha
-1}\log\text{Tr}\left\{  \left(  \sigma^{\frac{1-\alpha}{2\alpha}}\rho
\sigma^{\frac{1-\alpha}{2\alpha}}\right)  ^{\alpha}\right\}  ,
\end{equation}
whenever the support of $\rho$ is contained in the support of $\sigma$ and it
is equal to $+\infty$ otherwise. Throughout this work, we will be considering
only the range $\alpha\in(1,\infty)$. For such choices, the H\"{o}lder
conjugate of $\alpha$ is $\alpha^{\prime}$ such that $\frac{1}{\alpha}%
+\frac{1}{\alpha^{\prime}}=1$, so that $\alpha^{\prime}=\alpha/\left(
\alpha-1\right)  \in(1,\infty)$. We can define a sandwiched R\'{e}nyi mutual
information of a bipartite state $\rho_{AB}$\ as%
\begin{equation}
\widetilde{I}_{\alpha}\left(  A;B\right)  _{\rho}\equiv\min_{\sigma_{B}%
}\widetilde{D}_{\alpha}\left(  \rho_{AB}\Vert\rho_{A}\otimes\sigma_{B}\right)
.
\end{equation}
Let $\left\vert \Gamma\right\rangle _{AA^{\prime}}$ denote the
\textquotedblleft maximally-entangled-like\textquotedblright\ vector:%
\begin{equation}
\left\vert \Gamma\right\rangle _{AA^{\prime}}\equiv\sum_{i}\left\vert
i\right\rangle _{A}\left\vert i\right\rangle _{A^{\prime}}.
\end{equation}
We can then define the sandwiched R\'{e}nyi mutual information of a channel as
\begin{equation}
\widetilde{I}_{\alpha}\left(  \mathcal{N}\right)  \equiv\max_{\rho_{A}%
}\widetilde{I}_{\alpha}\left(  A;B\right)  _{\omega},
\end{equation}
where%
\begin{equation}
\omega_{AB}\equiv\rho_{A}^{1/2}\mathcal{N}_{A^{\prime}\rightarrow B}\left(
\Gamma_{AA^{\prime}}\right)  \rho_{A}^{1/2}.
\end{equation}
In what follows, we will use the following abbreviation for the Choi matrix:
\begin{equation}
\Gamma_{AB}^{\mathcal{N}}\equiv\mathcal{N}_{A^{\prime}\rightarrow B}\left(
\Gamma_{AA^{\prime}}\right)  ,
\end{equation}
so that%
\begin{equation}
\omega_{AB}=\rho_{A}^{1/2}\Gamma_{AB}^{\mathcal{N}}\rho_{A}^{1/2}.
\end{equation}

The quantum relative entropy $D\left(  \rho\Vert\sigma\right)  $ is defined as%
\begin{equation}
D\left(  \rho\Vert\sigma\right)  \equiv\text{Tr}\left\{  \rho\left[  \log
\rho-\log\sigma\right]  \right\}  ,
\end{equation}
whenever the support of $\rho$ is contained in the support of $\sigma$ and it
is equal to $+\infty$ otherwise.

\section{Bounding the success probability of any entanglement-assisted code
with a generalized divergence}

We first review the steps in any\ general $\left(  n,R,\varepsilon\right)  $
protocol for entanglement-assisted classical communication over $n$ uses of a
quantum channel. Such a protocol begins with a sender Alice and a receiver Bob
sharing an arbitrary bipartite entangled state $\Psi_{T_{A}T_{B}}$, where
Alice possesses the system $T_{A}$ and Bob the system $T_{B}$. Their goal is
to use the entangled state $\Psi_{T_{A}T_{B}}$ and $n$ instances of a noisy
channel $\mathcal{N}_{A^{\prime}\rightarrow B}$ in order for Alice to transmit
a message $M$ to Bob. The receiver Bob combines his share $T_{B}$\ of the
entanglement and the $n$ output systems of the noisy channel in order to
decode the message. This scheme is an $\left(  n,R,\varepsilon\right)  $
protocol if the error probability is no larger than $\varepsilon>0$ and the
rate $R=\frac{1}{n}\log_{2}\left\vert M\right\vert $, where $\left\vert
M\right\vert $ denotes the size of the message.

For the purposes of proving a strong converse theorem, we can assume that
Alice selects the message $M$ according to a uniform distribution. (The rate
at which they can communicate when Alice uses a particular message
distribution can only be larger than that for a scheme that should work for
all message distributions.)\ Thus, the protocol begins with Alice preparing a
classically-correlated state of the following form:%
\begin{equation}
\overline{\Phi}_{MM^{^{\prime}}}=\frac{1}{|M|}\sum_{m}\left\vert
m\right\rangle \left\langle m\right\vert _{M}\otimes\left\vert m\right\rangle
\left\langle m\right\vert _{M^{^{\prime}}},
\end{equation}
and their goal will be for her and Bob to share a state close to this one at
the end of the protocol. Alice appends the registers $MM^{\prime}$\ to her
share $T_{A}$\ of the entanglement, so that the global state is%
\begin{equation}
\overline{\Phi}_{MM^{^{\prime}}}\otimes\Psi_{T_{A}T_{B}}.
\end{equation}
The most general encoding that she can perform is a CPTP\ map $\mathcal{E}%
_{M^{\prime}T_{A}\rightarrow A^{\prime n}}$, taking the $M^{\prime}T_{A}$
registers to a register $A^{\prime n}$ that can be transmitted through $n$
instances of the channel:%
\begin{align}
\rho_{MA^{\prime n}T_{B}} &  =\mathcal{E}_{M^{\prime}T_{A}\rightarrow
A^{\prime n}}\left(  \overline{\Phi}_{MM^{\prime}}\otimes\Psi_{T_{A}T_{B}%
}\right)  \\
&  =\frac{1}{\left\vert M\right\vert }\sum_{m}\left\vert m\right\rangle
\left\langle m\right\vert _{M}\otimes\mathcal{E}_{M^{\prime}T_{A}\rightarrow
A^{\prime n}}\left(  \left\vert m\right\rangle \left\langle m\right\vert
_{M^{\prime}}\otimes\Psi_{T_{A}T_{B}}\right)  \\
&  =\frac{1}{\left\vert M\right\vert }\sum_{m}\left\vert m\right\rangle
\left\langle m\right\vert _{M}\otimes\mathcal{E}_{T_{A}\rightarrow A^{\prime
n}}^{m}\left(  \Psi_{T_{A}T_{B}}\right)  .
\end{align}
In the above, we are exploiting the fact that a single CPTP\ map
$\mathcal{E}_{M^{\prime}T_{A}\rightarrow A^{\prime n}}$ acting on the
registers $M^{\prime}T_{A}$ can be written as $\left\vert M\right\vert
$~CPTP\ maps $\{\mathcal{E}_{T_{A}\rightarrow A^{\prime n}}^{m}\}$ acting on
the register $T_{A}$ (since $M$ is a classical register) \cite{W11}. Following
\cite{MW12}, we define the average code density operator as follows:%
\begin{equation}
\rho_{A^{\prime n}}\equiv\frac{1}{\left\vert M\right\vert }\sum_{m}%
\mathcal{E}_{T_{A}\rightarrow A^{\prime n}}^{m}\left(  \Psi_{T_{A}}\right)  .
\end{equation}
Alice then transmits the systems $A^{\prime n}$ over $n$ uses of the noisy
channel $\mathcal{N}_{A^{\prime}\rightarrow B}$, with the overall state
becoming%
\begin{align}
\omega_{MB^{n}T_{B}} &  \equiv\mathcal{N}_{A^{\prime n}\rightarrow B^{n}%
}({\rho}_{MA^{\prime n}T_{B}})\\
&  =\frac{1}{|M|}\sum_{m}\left\vert m\right\rangle \left\langle m\right\vert
_{M}\otimes\mathcal{N}_{A^{\prime n}\rightarrow B^{n}}(\mathcal{E}%
_{T_{A}{\rightarrow}A^{\prime n}}^{m}(\Psi_{T_{A}T_{B}})),
\end{align}
where $\mathcal{N}_{A^{\prime n}\rightarrow B^{n}}\equiv(\mathcal{N}%
_{A^{\prime}\rightarrow B})^{\otimes n}$. Bob performs a POVM $\{\Lambda
_{B^{n}T_{B}}^{m}\}$ on the registers $B^{n}T_{B}$ in order to estimate the
message $m$ sent by Alice. The overall state after this step is%
\begin{equation}
{\omega}_{M\hat{M}}\equiv\frac{1}{|M|}\sum_{m,m^{\prime}}\left\vert
m\right\rangle \left\langle m\right\vert _{M}\otimes\text{Tr}{\left\{
\Lambda_{B^{n}T_{B}}^{m^{\prime}}\mathcal{N}_{A^{\prime n}\rightarrow B^{n}%
}(\mathcal{E}_{T_{A}{\rightarrow}A^{\prime n}}^{m}(\Psi_{T_{A}T_{B}%
}))\right\}  }\left\vert m^{\prime}\right\rangle \left\langle m^{\prime
}\right\vert _{\hat{M}}.
\end{equation}
Thus, the average success probability for Bob detecting the message correctly
is given by%
\begin{align}
p_{\text{succ}} &  =\frac{1}{\left\vert M\right\vert }\sum_{m}\text{Pr}%
\left\{  {\hat{M}=m|M=m}\right\}  \\
&  =\frac{1}{\left\vert M\right\vert }\sum_{m}\text{Tr}{\left\{
\Lambda_{B^{n}T_{B}}^{m}\mathcal{N}_{A^{\prime n}\rightarrow B^{n}%
}(\mathcal{E}_{T_{A}{\rightarrow}A^{\prime n}}^{m}(\Psi_{T_{A}T_{B}%
}))\right\}  }%
\end{align}
For any $(n,R,\epsilon)$ protocol, the above success probability is bounded
from below by $1-\varepsilon$.

At this point, we recall the \textquotedblleft entanglement-assisted
meta-converse\textquotedblright\ \cite[Propositions~20 and 21]{MW12}:

\begin{proposition}
[{\cite[Propositions~20 and 21]{MW12}}]\label{prop:17MW}For any
entanglement-assisted code of the above form, with average code density
operator $\rho_{A^{\prime n}}$, there exists a two-outcome POVM\ $\left\{
T_{AB^{n}},I-T_{AB^{n}}\right\}  $ such that%
\begin{align}
p_{\operatorname{succ}} &  =\operatorname{Tr}\left\{  T_{AB^{n}}%
\mathcal{N}_{A^{\prime n}\rightarrow B^{n}}(\psi_{AA^{\prime n}}^{\rho
})\right\}  ,\\
\frac{1}{\left\vert M\right\vert } &  =\operatorname{Tr}\left\{  T_{AB^{n}%
}\left(  \psi_{A}^{\rho}\otimes\sigma_{B^{n}}\right)  \right\}  ,
\end{align}
where $\psi_{AA^{\prime n}}^{\rho}$ is a purification of the average code
density operator $\rho_{A^{\prime n}}$ and $\sigma_{B^{n}}$ is any density operator.
\end{proposition}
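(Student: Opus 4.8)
The plan is to manufacture the test $T_{AB^n}$ directly from the code's encodings and Bob's decoding POVM, so that the success probability and the ``uniform prior'' quantity $1/|M|$ emerge as the two acceptance probabilities of a single binary test. First I would replace each encoding $\mathcal{E}_{T_A\to A^{\prime n}}^m$ by a Stinespring isometry $U_{T_A\to A^{\prime n}E_m}^m$, assume without loss of generality that $\Psi_{T_A T_B}$ is pure (absorbing any purifying reference into the system $A$ below), set $|\phi^m\rangle_{A^{\prime n}E_m T_B}\equiv U_{T_A\to A^{\prime n}E_m}^m|\Psi\rangle_{T_A T_B}$, and form
\[
|\psi^\rho\rangle_{AA^{\prime n}}\equiv\frac{1}{\sqrt{|M|}}\sum_m|m\rangle_M\otimes|\phi^m\rangle_{A^{\prime n}E_m T_B},
\]
where the purifying system $A$ bundles together the message register $M$, Bob's share $T_B$, and the environments $E_m$. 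Since the $|m\rangle_M$ are orthonormal, tracing out $A$ kills the cross terms and returns $\frac{1}{|M|}\sum_m\mathcal{E}_{T_A\to A^{\prime n}}^m(\Psi_{T_A})=\rho_{A^{\prime n}}$, so this is a genuine purification of the average code density operator.

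Next I would define the ``read the message, then run Bob's decoder'' operator
\[
T_{AB^n}\equiv\sum_m|m\rangle\langle m|_M\otimes\Lambda_{B^n T_B}^m\otimes I_E,
\]
which lives on $AB^n$ because $M$, $T_B$, and $E$ all sit inside $A$. It is a valid POVM element: block-diagonality in $M$ together with $0\le\Lambda_{B^n T_B}^m\le I$ gives $0\le T_{AB^n}\le I$, so $\{T_{AB^n},I-T_{AB^n}\}$ is the required two-outcome POVM.

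Both identities then follow from short computations. For the first, applying $T_{AB^n}$ to $\mathcal{N}_{A^{\prime n}\to B^n}(\psi^\rho_{AA^{\prime n}})$ the diagonal factor $|m\rangle\langle m|_M$ annihilates every term off-diagonal in $m$, while the factor $I_E$ traces out each environment and turns $|\phi^m\rangle\langle\phi^m|$ back into $\mathcal{E}_{T_A\to A^{\prime n}}^m(\Psi_{T_A T_B})$; what remains is exactly $\frac{1}{|M|}\sum_m\operatorname{Tr}\{\Lambda_{B^n T_B}^m\mathcal{N}_{A^{\prime n}\to B^n}(\mathcal{E}_{T_A\to A^{\prime n}}^m(\Psi_{T_A T_B}))\}=p_{\operatorname{succ}}$. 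For the second, the same diagonalization reduces $\operatorname{Tr}\{T_{AB^n}(\psi^\rho_A\otimes\sigma_{B^n})\}$ to $\frac{1}{|M|}\sum_m\operatorname{Tr}\{\Lambda_{B^n T_B}^m(\Psi_{T_B}\otimes\sigma_{B^n})\}$, where I use that the $T_B$-marginal of $\mathcal{E}_{T_A\to A^{\prime n}}^m(\Psi_{T_A T_B})$ equals $\Psi_{T_B}$ for every $m$ by trace preservation of the encodings. Completeness $\sum_m\Lambda_{B^n T_B}^m=I$ together with normalization of $\Psi_{T_B}$ and $\sigma_{B^n}$ then collapses this to $1/|M|$ for an arbitrary density operator $\sigma_{B^n}$.

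The one genuinely conceptual step, which I would flag as the crux, is choosing the purification so that the message register $M$ and Bob's entanglement share $T_B$ are both absorbed into the single reference $A$: this is what lets a test acting on $AB^n$ simultaneously read the message and emulate Bob's decoder. The accompanying subtlety worth checking is the message-independence of the $T_B$-marginal, $\operatorname{Tr}_{A^{\prime n}}\mathcal{E}_{T_A\to A^{\prime n}}^m(\Psi_{T_A T_B})=\Psi_{T_B}$; this is precisely what makes the second acceptance probability equal $1/|M|$ for \emph{every} $\sigma_{B^n}$, thereby freeing the later minimization over $\sigma_{B^n}$. Everything else is bookkeeping with orthogonality of the $|m\rangle_M$ and completeness of the POVM.
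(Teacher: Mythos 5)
Your proof is correct: purifying the code by dilating each encoder $\mathcal{E}^m_{T_A\to A'^n}$ and bundling the message register $M$, Bob's share $T_B$, and the environments into the single reference $A$, then taking the test $T_{AB^n}=\sum_m \vert m\rangle\langle m\vert_M\otimes\Lambda^m_{B^nT_B}\otimes I_E$, yields both identities, and you correctly isolate the key point that $\operatorname{Tr}_{A'^n}\left\{\mathcal{E}^m_{T_A\to A'^n}(\Psi_{T_AT_B})\right\}=\Psi_{T_B}$ for every $m$, which is what makes the second acceptance probability equal $1/\vert M\vert$ for an arbitrary $\sigma_{B^n}$. Note that the paper itself gives no proof of this proposition---it is imported directly from Propositions~20 and~21 of the Matthews--Wehner paper cited there---and your construction is essentially the standard one underlying that result (the only cosmetic point being that the environments $E_m$ should be embedded in a common space so that a single purification exists).
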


This proposition allows us to relate the rate and success probability of an
entanglement-assisted code to any generalized divergence, extending the
framework of \cite{PV10,SW12}\ to the entanglement-assisted case. Let
$I_{\mathbf{D}}\left(  \mathcal{M}\right)  $ denote the generalized mutual
information of a quantum channel $\mathcal{M}$, constructed from any
generalized divergence according to the recipe in (\ref{eq:recipe-first}%
)-(\ref{eq:recipe-last}). Let $\delta\left(  p\|q\right)  $ be equal to the
generalized divergence $\mathbf{D}\left(  \rho_{p}\|\rho_{q}\right)  $
evaluated for the commuting qubit states%
\begin{align}
\rho_{p}  &  =p\left\vert 0\right\rangle \left\langle 0\right\vert +\left(
1-p\right)  \left\vert 1\right\rangle \left\langle 1\right\vert ,\\
\rho_{q}  &  =q\left\vert 0\right\rangle \left\langle 0\right\vert +\left(
1-q\right)  \left\vert 1\right\rangle \left\langle 1\right\vert .
\end{align}
(Note that monotonicity of the generalized divergence implies its unitary
invariance, which in turn implies that it is independent of the basis when
evaluated for classical, commuting states.)

\begin{proposition}
\label{prop:gen-div-converse}The following bound holds for any $\left(
n,R,\varepsilon\right)  $ entanglement-assisted code:%
\begin{equation}
I_{\mathbf{D}}\left(  \mathcal{N}^{\otimes n}\right)  \geq\delta\left(
\varepsilon\|1-2^{-nR}\right)  .
\end{equation}

\end{proposition}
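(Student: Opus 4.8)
The plan is to apply the entanglement-assisted meta-converse of Proposition~\ref{prop:17MW} together with the data-processing (monotonicity) property that defines a generalized divergence. The meta-converse hands us a two-outcome POVM $\left\{ T_{AB^{n}},I-T_{AB^{n}}\right\}$ such that $p_{\operatorname{succ}}=\operatorname{Tr}\left\{ T_{AB^{n}}\mathcal{N}_{A^{\prime n}\rightarrow B^{n}}(\psi_{AA^{\prime n}}^{\rho})\right\}$ and $\left\vert M\right\vert^{-1}=\operatorname{Tr}\left\{ T_{AB^{n}}\left(\psi_{A}^{\rho}\otimes\sigma_{B^{n}}\right)\right\}$, where $\sigma_{B^{n}}$ is arbitrary. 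The key observation is that this POVM defines a binary measurement channel $\mathcal{T}$ sending a state $\xi$ to the classical state $\operatorname{Tr}\{T_{AB^n}\xi\}\left\vert 0\right\rangle\left\langle 0\right\vert+\operatorname{Tr}\{(I-T_{AB^n})\xi\}\left\vert 1\right\rangle\left\langle 1\right\vert$. Applying this channel to the two states $\mathcal{N}^{\otimes n}(\psi_{AA^{\prime n}}^{\rho})$ and $\psi_{A}^{\rho}\otimes\sigma_{B^{n}}$ produces exactly the commuting classical states $\rho_{p}$ and $\rho_{q}$ appearing in the definition of $\delta$, with $p=p_{\operatorname{succ}}$ and $q=\left\vert M\right\vert^{-1}=2^{-nR}$.

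The main chain of inequalities then runs as follows. First I would invoke monotonicity of $\mathbf{D}$ under the channel $\mathcal{T}$ to obtain
\begin{equation}
\mathbf{D}\left(\mathcal{N}^{\otimes n}(\psi_{AA^{\prime n}}^{\rho})\middle\Vert\psi_{A}^{\rho}\otimes\sigma_{B^{n}}\right)\geq\mathbf{D}\left(\mathcal{T}(\mathcal{N}^{\otimes n}(\psi_{AA^{\prime n}}^{\rho}))\middle\Vert\mathcal{T}(\psi_{A}^{\rho}\otimes\sigma_{B^{n}})\right)=\delta\left(p_{\operatorname{succ}}\middle\Vert 2^{-nR}\right).
\end{equation}
Next I would note that the left-hand state $\mathcal{N}^{\otimes n}(\psi_{AA^{\prime n}}^{\rho})$ has the form $\omega_{AB^{n}}$ arising from the optimizing recipe of (\ref{eq:recipe-first})--(\ref{eq:recipe-last}): since $\psi_{AA^{\prime n}}^{\rho}$ purifies $\rho_{A^{\prime n}}$ and its reduced state on $A$ equals $\psi_{A}^{\rho}$, this output is precisely of the form $\rho_{A}^{1/2}\Gamma^{\mathcal{N}^{\otimes n}}_{AB^{n}}\rho_{A}^{1/2}$ for the appropriate choice of $\rho_A$ (up to the isometric freedom in the purification, which monotonicity/unitary invariance absorbs). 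Because $\widetilde{I}_{\mathbf{D}}(A;B^{n})_{\omega}$ minimizes over $\sigma_{B^n}$ and then $I_{\mathbf{D}}(\mathcal{N}^{\otimes n})$ maximizes over the input, I would choose $\sigma_{B^{n}}$ freely to match, giving $I_{\mathbf{D}}(\mathcal{N}^{\otimes n})\geq\mathbf{D}(\omega_{AB^{n}}\Vert\psi_{A}^{\rho}\otimes\sigma_{B^{n}})$.

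Finally, I would use monotonicity of $\delta(p\Vert q)$ in each argument together with the code bound $p_{\operatorname{succ}}\geq 1-\varepsilon$ and $2^{-nR}=\left\vert M\right\vert^{-1}\leq 1-2^{-nR}$ (valid in the relevant regime) to conclude $\delta\left(p_{\operatorname{succ}}\middle\Vert 2^{-nR}\right)\geq\delta\left(\varepsilon\middle\Vert 1-2^{-nR}\right)$, using the symmetry that swapping the success/failure outcomes exchanges $p\leftrightarrow 1-p$ and $q\leftrightarrow 1-q$ inside the divergence. Stringing these together yields $I_{\mathbf{D}}(\mathcal{N}^{\otimes n})\geq\delta(\varepsilon\Vert 1-2^{-nR})$.

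**Anticipated obstacle.** The routine part is the data-processing step; the delicate point is the bookkeeping in the last paragraph, namely verifying that $\delta$ has the right monotonicity in each slot so that replacing $p_{\operatorname{succ}}$ by the lower bound $1-\varepsilon$ and $2^{-nR}$ by $1-2^{-nR}$ only decreases the quantity, and confirming that the outcome-relabeling genuinely sends $\delta(p\Vert q)$ to $\delta(1-p\Vert 1-q)$. I would establish these by invoking unitary (basis) invariance of the generalized divergence on classical commuting states, as already remarked after the definition of $\delta$, together with a further data-processing argument that discards or coarse-grains outcomes; care is needed to ensure the direction of each inequality is consistent with the final desired form.
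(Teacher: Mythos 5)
Your proposal is correct and follows essentially the same route as the paper's proof: apply the meta-converse test of Proposition~\ref{prop:17MW} as a binary measurement channel, invoke monotonicity of $\mathbf{D}$ under that channel, and then combine the relabeling symmetry $\delta\left(p\|q\right)=\delta\left(1-p\|1-q\right)$ with the monotonicity $\delta\left(p^{\prime}\|q\right)\leq\delta\left(p\|q\right)$ for $p\leq p^{\prime}\leq q$ from \cite{PV10}, before minimizing over $\sigma_{B^{n}}$ and maximizing over inputs. The one detail to tighten is the regime condition flagged in your last paragraph: the inequality you need is $1-p_{\text{succ}}\leq\varepsilon\leq1-2^{-nR}$ (the paper assumes $\varepsilon\leq1-2^{-nR}$ without loss of generality), rather than the condition $2^{-nR}\leq1-2^{-nR}$ that you wrote.
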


\begin{proof}
Our starting point for a proof is to exploit the test from
Proposition~\ref{prop:17MW} and define the classical states $\rho
_{p_{\text{succ}}}$ and $\rho_{1/\left\vert M\right\vert }$, which arise from
applying the measurement $\{T_{AB^{n}},I-T_{AB^{n}}\}$ to the following
states:%
\begin{align}
&  \mathcal{N}_{A^{\prime n}\rightarrow B^{n}}(\psi_{AA^{\prime n}}^{\rho}),\\
&  \psi_{A}^{\rho}\otimes\sigma_{B^{n}},
\end{align}
respectively. Without loss of generality, we can assume that $\varepsilon
\leq1-2^{-nR}$ (otherwise, there would be no need to prove the strong converse
since the error probability would obey the bound $\varepsilon>1-2^{-nR}$).
Then we have the following inequalities:%
\begin{align}
\delta\left(  \varepsilon\|1-2^{-nR}\right)   &  \leq\delta\left(
1-p_{\text{succ}}\middle\|1-2^{-nR}\right) \\
&  \leq\mathbf{D}\left(  \mathcal{N}_{A^{\prime n}\rightarrow B^{n}}%
(\psi_{AA^{\prime n}}^{\rho})\middle\|\psi_{A}^{\rho}\otimes\sigma_{B^{n}%
}\right)  .
\end{align}
The first inequality follows from the monotonicity $\delta\left(  p^{\prime
}\|q\right)  \leq\delta\left(  p\|q\right)  $ whenever $p\leq p^{\prime}\leq
q$ \cite{PV10} (recall that we have $1-p_{\text{succ}}\leq\varepsilon
\leq1-2^{-nR}$). The second inequality follows from monotonicity of the
generalized divergence under the test $\left\{  T_{AB^{n}},I-T_{AB^{n}%
}\right\}  $. Since the inequality holds for all states $\sigma_{B^{n}}$, we
can find the tightest upper bound on $\delta\left(  \varepsilon\|1-2^{-nR}%
\right)  $ for a code with average code density operator $\rho_{A^{\prime n}}$
by taking a minimum%
\begin{equation}
\delta\left(  \varepsilon\|1-2^{-nR}\right)  \leq\min_{\sigma_{B^{n}}%
}\mathbf{D}\left(  \mathcal{N}_{A^{\prime n}\rightarrow B^{n}}(\psi
_{AA^{\prime n}}^{\rho})\middle\|\psi_{A}^{\rho}\otimes\sigma_{B^{n}}\right)
.
\end{equation}
Finally, we can remove the dependence of the upper bound on any particular
code by maximizing over all code density operators $\rho_{A^{\prime n}}$:%
\begin{equation}
\delta\left(  \varepsilon\|1-2^{-nR}\right)  \leq\max_{\rho_{A^{\prime n}}%
}\min_{\sigma_{B^{n}}}\mathbf{D}\left(  \mathcal{N}_{A^{\prime n}\rightarrow
B^{n}}(\psi_{AA^{\prime n}}^{\rho})\middle\|\psi_{A}^{\rho}\otimes
\sigma_{B^{n}}\right)  .
\end{equation}
This is then equivalent to the inequality in the statement of the proposition.
\end{proof}

For our purposes here, we can evaluate the bound from
Proposition~\ref{prop:gen-div-converse} by setting the divergence to be the
sandwiched R\'{e}nyi relative entropy (however, note that there is no need for
the assumption $\varepsilon\leq1-2^{-nR}$ when employing the sandwiched
R\'{e}nyi relative entropy). Following steps identical to those in
\cite[Section~6]{WWY13}, we arrive at the following bound on the success
probability of any entanglement-assisted code:%
\begin{equation}
p_{\text{succ}}\leq2^{-n\left(  \frac{\alpha-1}{\alpha}\right)
\left(  R-\frac{1}{n}\widetilde{I}_{\alpha}\left(  \mathcal{N}^{\otimes
n}\right)  \right)  }.\label{eq:error-bound-a-holevo}%
\end{equation}
We stress that this bound holds for all $\alpha>1$ and $n\geq1$. In order to arrive at the
strong converse, we should now prove that the sandwiched R\'{e}nyi mutual
information is additive as a function of quantum channels for $\alpha
\in(1,\infty)$.

\section{Additivity of the sandwiched R\'{e}nyi mutual information of a
quantum channel}

In this section, we show that the sandwiched R\'{e}nyi mutual information
$\widetilde{I}_{\alpha}\left(  \mathcal{N}^{\otimes n}\right)  $ is additive
as a function of quantum channels for $\alpha\in(1,\infty)$. Before doing so,
we require a few supplementary lemmas.

\begin{lemma}
The following equality holds for $\alpha\in\left(  1,\infty\right)  $%
\begin{equation}
\widetilde{D}_{\alpha}\left(  \rho_{A}^{1/2}\Gamma_{AB}^{\mathcal{N}}\rho
_{A}^{1/2}\middle\|\rho_{A}\otimes\sigma_{B}\right)  =\frac{\alpha}{\alpha
-1}\log\left\Vert \left(  \left[  \Gamma_{AB}^{\mathcal{N}}\right]
^{1/2}\left(  \rho_{A}^{\frac{1}{\alpha}}\otimes\sigma_{B}^{\frac{1-\alpha
}{\alpha}}\right)  \left[  \Gamma_{AB}^{\mathcal{N}}\right]  ^{1/2}\right)
\right\Vert _{\alpha}%
\end{equation}

\end{lemma}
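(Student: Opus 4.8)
The plan is to start from the definition of the sandwiched Rényi relative entropy and manipulate the argument of the logarithm into the desired $\alpha$-norm form. Recall that
\[
\widetilde{D}_{\alpha}\left(\omega\middle\|\tau\right)=\frac{\alpha}{\alpha-1}\log\left\Vert\tau^{\frac{1-\alpha}{2\alpha}}\omega\,\tau^{\frac{1-\alpha}{2\alpha}}\right\Vert_{\alpha},
\]
as noted in the excerpt. So first I would substitute $\omega=\rho_{A}^{1/2}\Gamma_{AB}^{\mathcal{N}}\rho_{A}^{1/2}$ and $\tau=\rho_{A}\otimes\sigma_{B}$ into this expression.

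The key computational step is to handle the factor $\tau^{\frac{1-\alpha}{2\alpha}}$. Since $\rho_{A}$ and $\sigma_{B}$ act on distinct tensor factors, we have $(\rho_{A}\otimes\sigma_{B})^{\frac{1-\alpha}{2\alpha}}=\rho_{A}^{\frac{1-\alpha}{2\alpha}}\otimes\sigma_{B}^{\frac{1-\alpha}{2\alpha}}$. Writing things out, the argument of the norm becomes
\[
\left(\rho_{A}^{\frac{1-\alpha}{2\alpha}}\otimes\sigma_{B}^{\frac{1-\alpha}{2\alpha}}\right)\left(\rho_{A}^{1/2}\otimes I_{B}\right)\Gamma_{AB}^{\mathcal{N}}\left(\rho_{A}^{1/2}\otimes I_{B}\right)\left(\rho_{A}^{\frac{1-\alpha}{2\alpha}}\otimes\sigma_{B}^{\frac{1-\alpha}{2\alpha}}\right).
\]
On the $A$ factor the powers of $\rho_{A}$ combine: $\frac{1-\alpha}{2\alpha}+\frac{1}{2}=\frac{1}{2\alpha}$, so each of the two outer sandwiching factors contributes $\rho_{A}^{1/2\alpha}\otimes\sigma_{B}^{\frac{1-\alpha}{2\alpha}}$, and the expression collapses to
\[
\left(\rho_{A}^{\frac{1}{2\alpha}}\otimes\sigma_{B}^{\frac{1-\alpha}{2\alpha}}\right)\Gamma_{AB}^{\mathcal{N}}\left(\rho_{A}^{\frac{1}{2\alpha}}\otimes\sigma_{B}^{\frac{1-\alpha}{2\alpha}}\right).
\]
This already yields a valid $\alpha$-norm expression, but it is in the ``$\rho,\sigma$ outside, $\Gamma$ inside'' arrangement rather than the ``$\Gamma^{1/2}$ outside, $\rho,\sigma$ inside'' arrangement of the claim.

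To bridge the two arrangements, the decisive tool is unitary invariance of the Schatten norm together with the identity $\left\Vert X X^{\dagger}\right\Vert_{\alpha}=\left\Vert X^{\dagger}X\right\Vert_{\alpha}$ (equivalently, $\left\Vert A B\right\Vert_{\alpha}=\left\Vert B A\right\Vert_{\alpha}$ for the singular values of a product, which follows since $AB$ and $BA$ have the same nonzero singular values). Writing $S\equiv\rho_{A}^{\frac{1}{2\alpha}}\otimes\sigma_{B}^{\frac{1-\alpha}{2\alpha}}$ and $G\equiv\Gamma_{AB}^{\mathcal{N}}$, I want to pass from $\left\Vert S G S\right\Vert_{\alpha}$ to $\left\Vert G^{1/2}\,S^{2}\,G^{1/2}\right\Vert_{\alpha}$. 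Setting $X=G^{1/2}S$, one has $X^{\dagger}X=S G^{1/2}\cdot G^{1/2}S=S G S$ (using $G=G^{1/2}G^{1/2}$ and $S=S^{\dagger}$, $G^{1/2}=(G^{1/2})^{\dagger}$ since $G\geq 0$), while $X X^{\dagger}=G^{1/2}S^{2}G^{1/2}=G^{1/2}\left(\rho_{A}^{\frac{1}{\alpha}}\otimes\sigma_{B}^{\frac{1-\alpha}{\alpha}}\right)G^{1/2}$; the two therefore have equal $\alpha$-norms, which is exactly the claimed identity. The main obstacle, and the one point I would check most carefully, is that $\Gamma_{AB}^{\mathcal{N}}\geq 0$ so that $G^{1/2}$ is well defined and self-adjoint; this holds because $\mathcal{N}$ is completely positive, making its Choi matrix positive semidefinite. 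A minor subtlety is the support condition (whether $\rho$'s support lies in $\sigma$'s support), but since we work with the norm expression for $\alpha>1$ this is handled uniformly, with both sides being $+\infty$ when the condition fails.
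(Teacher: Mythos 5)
Your proposal is correct and follows essentially the same route as the paper's proof: split the tensor power $(\rho_A\otimes\sigma_B)^{\frac{1-\alpha}{2\alpha}}$, absorb $\rho_A^{1/2}$ into $\rho_A^{\frac{1-\alpha}{2\alpha}}$ to get the exponent $\frac{1}{2\alpha}$, and then move the sandwiching factors through $\Gamma_{AB}^{\mathcal{N}}$ using the fact that $X^{\dagger}X$ and $XX^{\dagger}$ share nonzero spectrum. The only cosmetic difference is that you work with the $\alpha$-norm throughout while the paper keeps the trace form until the last line, and you make explicit the $\left\Vert X^{\dagger}X\right\Vert_{\alpha}=\left\Vert XX^{\dagger}\right\Vert_{\alpha}$ justification that the paper leaves implicit.
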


\begin{proof}
This follows from%
\begin{align}
&  \!\!\!\!\!\widetilde{D}_{\alpha}\left(  \rho_{A}^{1/2}\Gamma_{AB}%
^{\mathcal{N}}\rho_{A}^{1/2}\middle\|\rho_{A}\otimes\sigma_{B}\right)
\nonumber\\
&  =\frac{1}{\alpha-1}\log\text{Tr}\left\{  \left(  \left(  \rho_{A}%
\otimes\sigma_{B}\right)  ^{\frac{1-\alpha}{2\alpha}}\rho_{A}^{1/2}\Gamma
_{AB}^{\mathcal{N}}\rho_{A}^{1/2}\left(  \rho_{A}\otimes\sigma_{B}\right)
^{\frac{1-\alpha}{2\alpha}}\right)  ^{\alpha}\right\} \\
&  =\frac{1}{\alpha-1}\log\text{Tr}\left\{  \left(  \left(  \rho_{A}%
^{\frac{1-\alpha}{2\alpha}}\otimes\sigma_{B}^{\frac{1-\alpha}{2\alpha}%
}\right)  \rho_{A}^{1/2}\Gamma_{AB}^{\mathcal{N}}\rho_{A}^{1/2}\left(
\rho_{A}^{\frac{1-\alpha}{2\alpha}}\otimes\sigma_{B}^{\frac{1-\alpha}{2\alpha
}}\right)  \right)  ^{\alpha}\right\} \\
&  =\frac{1}{\alpha-1}\log\text{Tr}\left\{  \left(  \left(  \rho_{A}%
^{\frac{1-\alpha}{2\alpha}}\rho_{A}^{1/2}\otimes\sigma_{B}^{\frac{1-\alpha
}{2\alpha}}\right)  \Gamma_{AB}^{\mathcal{N}}\left(  \rho_{A}^{1/2}\rho
_{A}^{\frac{1-\alpha}{2\alpha}}\otimes\sigma_{B}^{\frac{1-\alpha}{2\alpha}%
}\right)  \right)  ^{\alpha}\right\} \\
&  =\frac{1}{\alpha-1}\log\text{Tr}\left\{  \left(  \left(  \rho_{A}^{\frac
{1}{2\alpha}}\otimes\sigma_{B}^{\frac{1-\alpha}{2\alpha}}\right)  \Gamma
_{AB}^{\mathcal{N}}\left(  \rho_{A}^{\frac{1}{2\alpha}}\otimes\sigma
_{B}^{\frac{1-\alpha}{2\alpha}}\right)  \right)  ^{\alpha}\right\} \\
&  =\frac{1}{\alpha-1}\log\text{Tr}\left\{  \left(  \left[  \Gamma
_{AB}^{\mathcal{N}}\right]  ^{1/2}\left(  \rho_{A}^{\frac{1}{\alpha}}%
\otimes\sigma_{B}^{\frac{1-\alpha}{\alpha}}\right)  \left[  \Gamma
_{AB}^{\mathcal{N}}\right]  ^{1/2}\right)  ^{\alpha}\right\} \\
&  =\frac{\alpha}{\alpha-1}\log\left\Vert \left(  \left[  \Gamma
_{AB}^{\mathcal{N}}\right]  ^{1/2}\left(  \rho_{A}^{\frac{1}{\alpha}}%
\otimes\sigma_{B}^{\frac{1-\alpha}{\alpha}}\right)  \left[  \Gamma
_{AB}^{\mathcal{N}}\right]  ^{1/2}\right)  \right\Vert _{\alpha}%
\end{align}

\end{proof}

\begin{lemma}
The following function is concave in $\rho_{A}\in\mathcal{S}\left(
\mathcal{H}_{A}\right)  $ and convex in $\sigma_B\in\mathcal{S}\left(
\mathcal{H}_{B}\right)  $ for $\alpha\in\left(  1,\infty\right)  $:%
\begin{equation}
\left\Vert \left(  \left[  \Gamma_{AB}^{\mathcal{N}}\right]  ^{1/2}\left(
\rho_{A}^{\frac{1}{\alpha}}\otimes\sigma_{B}^{\frac{1-\alpha}{\alpha}}\right)
\left[  \Gamma_{AB}^{\mathcal{N}}\right]  ^{1/2}\right)  \right\Vert _{\alpha
}.\label{eq:convex-concave-func}%
\end{equation}

\end{lemma}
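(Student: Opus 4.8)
The plan is to prove the two claims separately, since they are independent and use different tools. The function in question is
\[
f(\rho_A,\sigma_B)\equiv\left\Vert\left(\left[\Gamma_{AB}^{\mathcal{N}}\right]^{1/2}\left(\rho_A^{1/\alpha}\otimes\sigma_B^{(1-\alpha)/\alpha}\right)\left[\Gamma_{AB}^{\mathcal{N}}\right]^{1/2}\right)\right\Vert_\alpha.
\]
For the concavity in $\rho_A$, the guiding observation is that the outer square root factors $\left[\Gamma_{AB}^{\mathcal{N}}\right]^{1/2}$ are fixed and positive, so the $\alpha$-norm here is being taken of an operator of the form $KXK^\dagger$ with $K=\left[\Gamma_{AB}^{\mathcal{N}}\right]^{1/2}$ and $X=\rho_A^{1/\alpha}\otimes\sigma_B^{(1-\alpha)/\alpha}$. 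Since $K$ is fixed, I expect the concavity to follow from a joint-concavity statement for a trace functional applied to $X$. The natural tool, as the introduction already signals, is the Lieb concavity theorem together with Hölder duality.

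\textbf{Concavity in $\rho_A$.} First I would pass from the $\alpha$-norm to a variational (dual) expression. By Hölder duality of Schatten norms, for $\alpha\in(1,\infty)$ with conjugate $\alpha'$ we can write
\[
\left\Vert Y\right\Vert_\alpha=\max_{\left\Vert Z\right\Vert_{\alpha'}\leq 1}\left\vert\operatorname{Tr}\{Z^\dagger Y\}\right\vert,
\]
and since $Y=K\,(\rho_A^{1/\alpha}\otimes\sigma_B^{(1-\alpha)/\alpha})\,K$ is positive (with $K$ positive), it suffices to optimize over positive $Z$ with $\left\Vert Z\right\Vert_{\alpha'}\leq 1$. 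This turns $f$ into a pointwise maximum over $Z$ of the linear-in-$Y$ functional $\operatorname{Tr}\{Z K (\rho_A^{1/\alpha}\otimes\sigma_B^{(1-\alpha)/\alpha}) K\}$. Because a pointwise maximum of concave functions is concave, it then suffices to show that for each fixed positive $Z$ and fixed $\sigma_B$ the map $\rho_A\mapsto\operatorname{Tr}\{W\,\rho_A^{1/\alpha}\}$ is concave, where $W$ is the fixed positive operator obtained by absorbing everything else (here one uses that $\rho_A^{1/\alpha}\otimes\sigma_B^{(1-\alpha)/\alpha}$ couples only through the tensor factor and the trace factorizes appropriately after inserting $Z$ and $K$). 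Concavity of $\rho\mapsto\operatorname{Tr}\{W\rho^{s}\}$ for $s=1/\alpha\in(0,1)$ and $W\geq 0$ is a standard consequence of the Lieb concavity theorem (operator concavity of $t\mapsto t^{s}$), which is exactly the ingredient cited in the excerpt. I would phrase this carefully so that the fixed factor $\sigma_B^{(1-\alpha)/\alpha}$ and the conjugating operators are genuinely held constant while varying $\rho_A$.

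\textbf{Convexity in $\sigma_B$.} Here the structure is reversed: the exponent $(1-\alpha)/\alpha$ is negative for $\alpha>1$, so $\sigma_B\mapsto\sigma_B^{(1-\alpha)/\alpha}$ is operator convex (the function $t\mapsto t^{(1-\alpha)/\alpha}$ with negative exponent lying in $[-1,0)$ is operator convex on $(0,\infty)$), precisely as the excerpt asserts. The plan is to combine this with two further facts: the Schatten $\alpha$-norm is a norm, hence convex; and the map $\sigma_B\mapsto K(\rho_A^{1/\alpha}\otimes\sigma_B^{(1-\alpha)/\alpha})K$ is a fixed positive congruence composed with the operator-convex tensor embedding, hence operator convex, meaning $f$ is the composition of a monotone-on-positives convex functional (the norm restricted to the positive cone, which is monotone under the positive-semidefinite order) with an operator-convex map. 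Convexity then follows from the standard principle that a nondecreasing convex scalar function composed with an operator-convex map is convex. I would make explicit that $\left\Vert\cdot\right\Vert_\alpha$ is monotone on positive operators so that the composition argument is valid.

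\textbf{Main obstacle.} I expect the delicate point to be the concavity in $\rho_A$, specifically verifying that the dual variational formulation genuinely isolates $\rho_A^{1/\alpha}$ as a \emph{linear} argument of a concave trace functional while all other operators (the fixed $K$, the dual witness $Z$, and $\sigma_B^{(1-\alpha)/\alpha}$) are held constant. The tensor-product structure $\rho_A^{1/\alpha}\otimes\sigma_B^{(1-\alpha)/\alpha}$ must be handled so that the Lieb-type concavity is applied to the single variable $\rho_A$; one must confirm that absorbing the fixed pieces into a single positive weight $W$ is legitimate and that linearity in $\rho_A^{1/\alpha}$ of $\operatorname{Tr}\{W\,\rho_A^{1/\alpha}\}$ is preserved. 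The convexity in $\sigma_B$ is comparatively routine once operator convexity of the negative power and monotonicity of the norm on the positive cone are in hand.
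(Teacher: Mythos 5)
Your treatment of convexity in $\sigma_{B}$ is correct and essentially identical to the paper's one-line argument (you make explicit the monotonicity of $\left\Vert \cdot\right\Vert _{\alpha}$ on the positive cone, which the paper leaves implicit; that is a reasonable addition). The concavity in $\rho_{A}$, however, contains a genuine logical error: you invoke the principle that ``a pointwise maximum of concave functions is concave.'' That principle is false. A pointwise supremum of \emph{convex} functions is convex, and a pointwise \emph{infimum} of concave functions is concave, but a pointwise supremum of concave functions is in general neither. Consequently, showing that $\rho_{A}\mapsto\operatorname{Tr}\{Z\left[  \Gamma_{AB}^{\mathcal{N}}\right]  ^{1/2}(\rho_{A}^{1/\alpha}\otimes\sigma_{B}^{(1-\alpha)/\alpha})\left[  \Gamma_{AB}^{\mathcal{N}}\right]  ^{1/2}\}$ is concave for each \emph{fixed} dual witness $Z$ tells you nothing about the concavity of the maximum over $Z$. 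This is precisely the step your proposal flags as delicate, and it is where the argument as written breaks down.

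The paper's proof avoids this by arranging for \emph{joint} concavity in the dual variable and $\rho_{A}$ simultaneously. After restricting the H\"{o}lder dual variable $X$ to positive operators with $\left\Vert X\right\Vert _{\alpha/(\alpha-1)}\leq1$, it substitutes $Y=X^{\alpha/(\alpha-1)}$, so the constraint becomes the convex set $\{Y\geq0,\ \operatorname{Tr}\{Y\}\leq1\}$ and the objective becomes
\begin{equation}
\operatorname{Tr}\left\{  Y^{\frac{\alpha-1}{\alpha}}\left[  \Gamma
_{AB}^{\mathcal{N}}\right]  ^{1/2}\left(  \rho_{A}^{\frac{1}{\alpha}}
\otimes\sigma_{B}^{\frac{1-\alpha}{\alpha}}\right)  \left[  \Gamma
_{AB}^{\mathcal{N}}\right]  ^{1/2}\right\}  .
\end{equation}
Lieb's concavity theorem, applied with $t=1/\alpha$, $R=Y$, and $S=\rho_{A}\otimes\sigma_{B}^{1-\alpha}$ (linear in $\rho_{A}$ for fixed $\sigma_{B}$), shows that this expression is jointly concave in the pair $(Y,\rho_{A})$, and a supremum over one argument of a jointly concave function on a convex set is concave in the remaining argument. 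Your route uses Lieb's theorem only in the weaker, single-variable form (operator concavity of $t\mapsto t^{1/\alpha}$), which is not enough once the maximization over the dual witness is reinstated. To repair the proof you must either establish the joint concavity as the paper does, or bypass the duality step entirely with a result such as Epstein's concavity theorem for $A\mapsto\operatorname{Tr}\{(K^{\dag}A^{p}K)^{1/p}\}$ with $p=1/\alpha$; as it stands, the concavity half of your argument does not go through.
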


\begin{proof}
Convexity in $\sigma_{B}$ follows immediately from operator convexity of
$x^{\left(  1-\alpha\right)  /\alpha}$ for $\alpha\in(1,\infty)$ and convexity
of the $a$-norm. The other statement follows in a few steps. We first
reexpress the $\alpha$-norm as the following optimization:%
\begin{equation}
\left\Vert \left(  \left[  \Gamma_{AB}^{\mathcal{N}}\right]  ^{1/2}\left(
\rho_{A}^{\frac{1}{\alpha}}\otimes\sigma_{B}^{\frac{1-\alpha}{\alpha}}\right)
\left[  \Gamma_{AB}^{\mathcal{N}}\right]  ^{1/2}\right)  \right\Vert _{\alpha
}=\max_{\left\Vert X\right\Vert _{\frac{\alpha}{\alpha-1}}\leq1}%
\text{Tr}\left\{  X\left[  \Gamma_{AB}^{\mathcal{N}}\right]  ^{1/2}\left(
\rho_{A}^{\frac{1}{\alpha}}\otimes\sigma_{B}^{\frac{1-\alpha}{\alpha}}\right)
\left[  \Gamma_{AB}^{\mathcal{N}}\right]  ^{1/2}\right\}  .
\end{equation}
Due to the operator $\left[  \Gamma_{AB}^{\mathcal{N}}\right]  ^{1/2}\left(
\rho_{A}^{\frac{1}{\alpha}}\otimes\sigma_{B}^{\frac{1-\alpha}{\alpha}}\right)
\left[  \Gamma_{AB}^{\mathcal{N}}\right]  ^{1/2}$ being positive, it suffices
to restrict the optimization to be over positive $X$ operators such that
$\left\Vert X\right\Vert _{\frac{\alpha}{\alpha-1}}\leq1$. However, this
restriction is equivalent to%
\begin{equation}
\text{Tr}\left\{  X^{\frac{\alpha}{\alpha-1}}\right\}  \leq1,\ \text{Tr}%
\left\{  X\right\}  \geq0.
\end{equation}
Thus, with the substitution $Y=X^{\alpha/\left(  \alpha-1\right)  }$, we can
rewrite the above as%
\begin{equation}
\max_{Y\geq0,\text{Tr}\left\{  Y\right\}  \leq1}\text{Tr}\left\{
Y^{\frac{\alpha-1}{\alpha}}\left[  \Gamma_{AB}^{\mathcal{N}}\right]
^{1/2}\left(  \rho_{A}^{\frac{1}{\alpha}}\otimes\sigma_{B}^{\frac{1-\alpha
}{\alpha}}\right)  \left[  \Gamma_{AB}^{\mathcal{N}}\right]  ^{1/2}\right\}  .
\end{equation}
To prove concavity in $\rho$, we require Lieb's concavity theorem \cite{L73},
a special case of which is the statement that the following function%
\begin{equation}
\text{Tr}\left\{  XR^{1-t}X^{\dag}S^{t}\right\}  ,
\end{equation}
for $R,S\geq0$ and $t\in\left[  0,1\right]  $, is jointly concave in $R$ and
$S$. Indeed, for $i\in\left\{  0,1\right\}  $, consider any $Y_{i}$ such that
$Y_{i}\geq0$, Tr$\left\{  Y_{i}\right\}  \leq1$, density operators $\rho_{i}$,
and $\lambda\in\left[  0,1\right]  $. We begin with%
\begin{align}
&  \lambda\text{Tr}\left\{  Y_{0}^{\frac{\alpha-1}{\alpha}}\left[  \Gamma
_{AB}^{\mathcal{N}}\right]  ^{1/2}\left(  \rho_{0}^{\frac{1}{\alpha}}%
\otimes\sigma_{B}^{\frac{1-\alpha}{\alpha}}\right)  \left[  \Gamma
_{AB}^{\mathcal{N}}\right]  ^{1/2}\right\}  \nonumber\\
&  \,\,\,\,\,\,\,\,\,\,\,+\left(  1-\lambda\right)  \text{Tr}\left\{
Y_{1}^{\frac{\alpha-1}{\alpha}}\left[  \Gamma_{AB}^{\mathcal{N}}\right]
^{1/2}\left(  \rho_{1}^{\frac{1}{\alpha}}\otimes\sigma_{B}^{\frac{1-\alpha
}{\alpha}}\right)  \left[  \Gamma_{AB}^{\mathcal{N}}\right]  ^{1/2}\right\}
\nonumber\\
&  \leq\text{Tr}\left\{  \left(  \lambda Y_{0}+\left(  1-\lambda\right)
Y_{1}\right)  ^{\frac{\alpha-1}{\alpha}}\left[  \Gamma_{AB}^{\mathcal{N}%
}\right]  ^{1/2}\left(  \left(  \lambda\rho_{0}+\left(  1-\lambda\right)
\rho_{1}\right)  ^{\frac{1}{\alpha}}\otimes\sigma_{B}^{\frac{1-\alpha}{\alpha
}}\right)  \left[  \Gamma_{AB}^{\mathcal{N}}\right]  ^{1/2}\right\}  \\
&  \leq\max_{Y\geq0,\text{Tr}\left\{  Y\right\}  \leq1}\text{Tr}\left\{
Y^{\frac{\alpha-1}{\alpha}}\left[  \Gamma_{AB}^{\mathcal{N}}\right]
^{1/2}\left(  \left(  \lambda\rho_{0}+\left(  1-\lambda\right)  \rho
_{1}\right)  ^{\frac{1}{\alpha}}\otimes\sigma_{B}^{\frac{1-\alpha}{\alpha}%
}\right)  \left[  \Gamma_{AB}^{\mathcal{N}}\right]  ^{1/2}\right\}  \\
&  =\left\Vert \left[  \Gamma_{AB}^{\mathcal{N}}\right]  ^{1/2}\left(  \left(
\lambda\rho_{0}+\left(  1-\lambda\right)  \rho_{1}\right)  ^{\frac{1}{\alpha}%
}\otimes\sigma_{B}^{\frac{1-\alpha}{\alpha}}\right)  \left[  \Gamma
_{AB}^{\mathcal{N}}\right]  ^{1/2}\right\Vert _{\alpha}%
\end{align}
Since the calculation is independent of which $Y_{0}$ and $Y_{1}$ we started
with, concavity of (\ref{eq:convex-concave-func}) in $\rho$ follows.
\end{proof}

\begin{lemma}
The following equality holds for $\alpha\in\left(  1,\infty\right)  $%
\begin{equation}
\widetilde{I}_{\alpha}\left(  \mathcal{N}\right)  =\min_{\sigma_{B}}\max
_{\rho_{A}}\widetilde{D}_{\alpha}\left(  \rho_{A}^{1/2}\Gamma_{AB}%
^{\mathcal{N}}\rho_{A}^{1/2}\middle\|\rho_{A}\otimes\sigma_{B}\right)  .
\end{equation}

\end{lemma}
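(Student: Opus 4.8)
The plan is to observe that the claimed identity is just an exchange of the maximum over $\rho_A$ with the minimum over $\sigma_B$ in an equivalent max--min expression for $\widetilde{I}_{\alpha}(\mathcal{N})$, and to justify this exchange via the Sion minimax theorem \cite{S58}, using the two preceding lemmas to supply the required convexity/concavity structure.

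First I would unfold the definitions. By construction, $\widetilde{I}_{\alpha}(\mathcal{N}) = \max_{\rho_A}\min_{\sigma_B}\widetilde{D}_{\alpha}(\omega_{AB}\|\omega_A\otimes\sigma_B)$ with $\omega_{AB} = \rho_A^{1/2}\Gamma_{AB}^{\mathcal{N}}\rho_A^{1/2}$. The one small computation needed here is that $\omega_A = \rho_A$: since $\mathcal{N}$ is trace preserving, $\text{Tr}_B\{\Gamma_{AB}^{\mathcal{N}}\} = \text{Tr}_{A'}\{\Gamma_{AA'}\} = I_A$, so that $\omega_A = \rho_A^{1/2} I_A \rho_A^{1/2} = \rho_A$. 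This yields
\begin{equation}
\widetilde{I}_{\alpha}(\mathcal{N}) = \max_{\rho_A}\min_{\sigma_B}\widetilde{D}_{\alpha}\left(\rho_A^{1/2}\Gamma_{AB}^{\mathcal{N}}\rho_A^{1/2}\middle\|\rho_A\otimes\sigma_B\right),
\end{equation}
so the lemma reduces to swapping the two optimizations. The domains $\mathcal{S}(\mathcal{H}_A)$ and $\mathcal{S}(\mathcal{H}_B)$ are compact and convex in finite dimensions, exactly the setting Sion's theorem requires, with $\rho_A$ playing the role of the maximization variable and $\sigma_B$ the minimization variable.

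Next I would feed in the two preceding lemmas. The first rewrites the objective as $\frac{\alpha}{\alpha-1}\log\|(\cdots)\|_{\alpha}$, and the second shows that the norm $\|(\cdots)\|_{\alpha}$ is concave in $\rho_A$ and convex in $\sigma_B$. The step I expect to require the most care is that Sion's theorem is stated for quasiconcave/quasiconvex objectives, whereas the second lemma controls only the convexity/concavity of the norm itself, not of $\widetilde{D}_{\alpha}$. The resolution is that for $\alpha > 1$ the map $t\mapsto\frac{\alpha}{\alpha-1}\log t$ is strictly increasing, and quasiconcavity and quasiconvexity are preserved under composition with a monotonically increasing function, since the level sets are unchanged. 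Hence concavity of the norm in $\rho_A$ gives quasiconcavity of $\widetilde{D}_{\alpha}$ in $\rho_A$, and convexity of the norm in $\sigma_B$ gives quasiconvexity of $\widetilde{D}_{\alpha}$ in $\sigma_B$. I would emphasize that one cannot simply push $\log$ through to preserve full convexity in $\sigma_B$ — the $\log$ of a convex function need not be convex — which is precisely why I route the argument through quasiconvexity.

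Finally, I would note the continuity of the objective in $\rho_A$ (the map $\rho_A\mapsto\rho_A^{1/\alpha}$ and the $\alpha$-norm are both continuous for $\alpha>1$) and its lower semicontinuity in $\sigma_B$ (the objective diverges to $+\infty$ as $\sigma_B$ approaches the boundary of the density-operator set, since $(1-\alpha)/\alpha<0$ makes $\sigma_B^{(1-\alpha)/\alpha}$ blow up there). With quasiconcavity plus upper semicontinuity in $\rho_A$, quasiconvexity plus lower semicontinuity in $\sigma_B$, and compact convex domains, all hypotheses of the Sion minimax theorem are satisfied, and I would conclude $\max_{\rho_A}\min_{\sigma_B} = \min_{\sigma_B}\max_{\rho_A}$, which is exactly the asserted identity.
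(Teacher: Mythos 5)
Your proposal is correct and follows essentially the same route as the paper: both rest on the two preceding lemmas (the $\alpha$-norm rewriting of $\widetilde{D}_{\alpha}$ and the concavity/convexity of the norm expression) together with the Sion minimax theorem. The only difference is cosmetic --- the paper pulls the increasing map $t\mapsto\frac{\alpha}{\alpha-1}\log t$ outside both optimizations and applies Sion to the genuinely concave--convex norm, whereas you keep the logarithm in place and invoke the quasiconcave/quasiconvex form of Sion's hypotheses; your explicit checks that $\omega_{A}=\rho_{A}$ and that the semicontinuity conditions hold are details the paper leaves implicit.
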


\begin{proof}
Consider that%
\begin{align}
\widetilde{I}_{\alpha}\left(  \mathcal{N}\right)   &  =\max_{\rho_{A}}%
\min_{\sigma_{B}}\widetilde{D}_{\alpha}\left(  \rho_{A}^{1/2}\Gamma
_{AB}^{\mathcal{N}}\rho_{A}^{1/2}\middle\|\rho_{A}\otimes\sigma_{B}\right) \\
&  =\max_{\rho_{A}}\min_{\sigma_{B}}\frac{\alpha}{\alpha-1}\log\left\Vert
\left(  \left[  \Gamma_{AB}^{\mathcal{N}}\right]  ^{1/2}\left(  \rho
_{A}^{\frac{1}{\alpha}}\otimes\sigma_{B}^{\frac{1-\alpha}{\alpha}}\right)
\left[  \Gamma_{AB}^{\mathcal{N}}\right]  ^{1/2}\right)  \right\Vert _{\alpha
}\\
&  =\frac{\alpha}{\alpha-1}\log\max_{\rho_{A}}\min_{\sigma_{B}}\left\Vert
\left(  \left[  \Gamma_{AB}^{\mathcal{N}}\right]  ^{1/2}\left(  \rho
_{A}^{\frac{1}{\alpha}}\otimes\sigma_{B}^{\frac{1-\alpha}{\alpha}}\right)
\left[  \Gamma_{AB}^{\mathcal{N}}\right]  ^{1/2}\right)  \right\Vert _{\alpha
}\\
&  =\frac{\alpha}{\alpha-1}\log\min_{\sigma_{B}}\max_{\rho_{A}}\left\Vert
\left(  \left[  \Gamma_{AB}^{\mathcal{N}}\right]  ^{1/2}\left(  \rho
_{A}^{\frac{1}{\alpha}}\otimes\sigma_{B}^{\frac{1-\alpha}{\alpha}}\right)
\left[  \Gamma_{AB}^{\mathcal{N}}\right]  ^{1/2}\right)  \right\Vert _{\alpha
}\\
&  =\min_{\sigma_{B}}\max_{\rho_{A}}\frac{\alpha}{\alpha-1}\log\left\Vert
\left(  \left[  \Gamma_{AB}^{\mathcal{N}}\right]  ^{1/2}\left(  \rho
_{A}^{\frac{1}{\alpha}}\otimes\sigma_{B}^{\frac{1-\alpha}{\alpha}}\right)
\left[  \Gamma_{AB}^{\mathcal{N}}\right]  ^{1/2}\right)  \right\Vert _{\alpha
}\\
&  =\min_{\sigma_{B}}\max_{\rho_{A}}\widetilde{D}_{\alpha}\left(  \rho
_{A}^{1/2}\Gamma_{AB}^{\mathcal{N}}\rho_{A}^{1/2}\middle\|\rho_{A}%
\otimes\sigma_{B}\right)  ,
\end{align}
where we applied the Sion minimax theorem \cite{S58}.
\end{proof}

We now prove additivity by exploiting the above lemmas and some results in
\cite{B13monotone,DJKR06}:

\begin{lemma}
\label{lem:subadd-EA}The sandwiched R\'{e}nyi mutual information is additive
as a function of channels for $\alpha\in\left(  1,\infty\right)  $, in the
sense that%
\begin{equation}
\widetilde{I}_{\alpha}\left(  \mathcal{N}_{1}\otimes\mathcal{N}_{2}\right)
=\widetilde{I}_{\alpha}\left(  \mathcal{N}_{1}\right)  +\widetilde{I}_{\alpha
}\left(  \mathcal{N}_{2}\right)  .
\end{equation}

\end{lemma}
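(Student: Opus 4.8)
The plan is to reduce $\widetilde{I}_{\alpha}\left(  \mathcal{N}\right)  $ to a completely bounded $1\rightarrow\alpha$ norm and then invoke the multiplicativity of these norms. Starting from the min-max form of the third lemma together with the norm identity of the first lemma, I would first apply the Schatten-norm eigenvalue identity $\text{Tr}\{(P^{1/2}QP^{1/2})^{\alpha}\}=\text{Tr}\{(Q^{1/2}PQ^{1/2})^{\alpha}\}$, valid for $P,Q\geq0$, with $P=\Gamma_{AB}^{\mathcal{N}}$ and $Q=\rho_{A}^{1/\alpha}\otimes\sigma_{B}^{(1-\alpha)/\alpha}$, to rewrite the operator inside the $\alpha$-norm as $(\rho_{A}^{1/2\alpha}\otimes\sigma_{B}^{(1-\alpha)/2\alpha})\Gamma_{AB}^{\mathcal{N}}(\rho_{A}^{1/2\alpha}\otimes\sigma_{B}^{(1-\alpha)/2\alpha})$. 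The next step is to recognize $(I_{A}\otimes\sigma_{B}^{(1-\alpha)/2\alpha})\Gamma_{AB}^{\mathcal{N}}(I_{A}\otimes\sigma_{B}^{(1-\alpha)/2\alpha})$ as the Choi matrix of the completely positive map $\mathcal{M}_{\sigma}(X)\equiv\sigma_{B}^{(1-\alpha)/2\alpha}\mathcal{N}(X)\sigma_{B}^{(1-\alpha)/2\alpha}$. Peeling off the remaining factors $\rho_{A}^{1/2\alpha}\otimes I_{B}$ and maximizing over $\rho_{A}$, definition (\ref{eq:CB-norm}) then yields the clean characterization
\begin{equation}
\widetilde{I}_{\alpha}\left(  \mathcal{N}\right)  =\min_{\sigma_{B}}\frac{\alpha}{\alpha-1}\log\left\Vert \mathcal{M}_{\sigma}\right\Vert _{\text{CB},1\rightarrow\alpha}.
\end{equation}

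For the upper bound $\widetilde{I}_{\alpha}(\mathcal{N}_{1}\otimes\mathcal{N}_{2})\leq\widetilde{I}_{\alpha}(\mathcal{N}_{1})+\widetilde{I}_{\alpha}(\mathcal{N}_{2})$---which is the only direction strictly needed for the strong converse bound (\ref{eq:error-bound-a-holevo})---I would restrict the minimization over $\sigma_{B_{1}B_{2}}$ to product operators $\sigma_{B_{1}}\otimes\sigma_{B_{2}}$. Since $(\sigma_{B_{1}}\otimes\sigma_{B_{2}})^{(1-\alpha)/2\alpha}$ factors and the underlying channel is $\mathcal{N}_{1}\otimes\mathcal{N}_{2}$, the twisted map factors as well, $\mathcal{M}_{\sigma_{1}\otimes\sigma_{2}}=\mathcal{M}_{\sigma_{1}}\otimes\mathcal{M}_{\sigma_{2}}$. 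Applying the multiplicativity of the completely bounded norm \cite{DJKR06,B13monotone} converts this product of maps into a product of norms, and since the two factors are then minimized independently, the logarithm turns the product into the desired sum. This direction imports the hard analytic content through the multiplicativity result.

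For the reverse inequality I would pass to the max-min form $\widetilde{I}_{\alpha}(\mathcal{N})=\max_{\rho_{A}}\min_{\sigma_{B}}\widetilde{D}_{\alpha}(\rho_{A}^{1/2}\Gamma_{AB}^{\mathcal{N}}\rho_{A}^{1/2}\Vert\rho_{A}\otimes\sigma_{B})$ and insert the product input $\rho_{A_{1}}^{\ast}\otimes\rho_{A_{2}}^{\ast}$ built from the individual optimizers. The output state is then the product $\omega_{1}^{\ast}\otimes\omega_{2}^{\ast}$, so the problem reduces to the additivity of the sandwiched R\'{e}nyi mutual information on product states, namely $\widetilde{I}_{\alpha}(A_{1}A_{2};B_{1}B_{2})_{\omega_{1}^{\ast}\otimes\omega_{2}^{\ast}}\geq\widetilde{I}_{\alpha}(A_{1};B_{1})_{\omega_{1}^{\ast}}+\widetilde{I}_{\alpha}(A_{2};B_{2})_{\omega_{2}^{\ast}}$.

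The main obstacle is precisely this last inequality: the minimization defining the mutual information ranges over all, possibly entangled, $\sigma_{B_{1}B_{2}}$, and a naive data-processing bound obtained by tracing out one factor recovers only the larger of the two terms rather than their sum. To overcome it I would argue that the minimizing $\sigma_{B}$ is of product form for a product state. The convexity in $\sigma_{B}$ established in the second lemma guarantees that the set of minimizers is convex and that any stationary point is a global minimizer; writing out the first-order optimality (fixed-point) condition for $\sigma_{B}$ shows that, for $\omega_{1}^{\ast}\otimes\omega_{2}^{\ast}$, a product $\sigma_{B_{1}}^{\ast}\otimes\sigma_{B_{2}}^{\ast}$ of the individual minimizers solves it. Additivity of $\widetilde{D}_{\alpha}$ on tensor products then evaluates the divergence at this product optimizer as the sum of the two single-system mutual informations, completing the reverse inequality and hence the claimed additivity.
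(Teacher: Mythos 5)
Your main line of argument coincides with the paper's: after the Sion minimax exchange, you restrict the outer minimization over $\sigma_{B_{1}B_{2}}$ to product operators, identify the resulting inner maximization over $\rho_{A_{1}A_{2}}$ with the completely bounded $1\rightarrow\alpha$ norm of the ``twisted'' map $\Theta_{\sigma^{(1-\alpha)/\alpha}}\circ\mathcal{N}$ (your $\mathcal{M}_{\sigma}$), and invoke the multiplicativity theorem of Devetak \emph{et al.}\ to split the product map into a product of norms. That is exactly the paper's proof of the subadditivity direction, and your preliminary reduction of $\widetilde{I}_{\alpha}(\mathcal{N})$ to $\min_{\sigma_{B}}\frac{\alpha}{\alpha-1}\log\Vert\mathcal{M}_{\sigma}\Vert_{\mathrm{CB},1\rightarrow\alpha}$ (via the $AA^{\dag}$ versus $A^{\dag}A$ spectral identity and the Choi-matrix recognition) is sound and matches the paper's use of \cite[Theorem~10]{DJKR06}.

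The gap is in the superadditivity direction. You correctly identify the obstruction --- inserting product inputs reduces the claim to additivity of $\widetilde{I}_{\alpha}(A;B)$ on product \emph{states}, and the minimization there still ranges over entangled $\sigma_{B_{1}B_{2}}$ --- but your proposed resolution is only asserted, not proved. The claim that the product $\sigma_{B_{1}}^{\ast}\otimes\sigma_{B_{2}}^{\ast}$ of the marginal minimizers satisfies the first-order optimality condition of the joint problem is precisely the nontrivial content here: the Fr\'{e}chet derivative of $\sigma\mapsto\sigma^{(1-\alpha)/\alpha}$ in a non-product perturbation direction is given by a Daleckii--Krein double-operator-integral formula and does not factor across the tensor legs, so stationarity of the product point does not follow formally from stationarity of each factor; one must actually verify that the gradient at the product point is proportional to the identity. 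The paper sidesteps this entirely by citing \cite[Theorem~11]{B13monotone}, which establishes exactly the additivity of $\widetilde{I}_{\alpha}(A;B)$ on tensor-product states. Either carry out the stationarity computation in full or import that result; as written, this direction of the equality is incomplete. (Your observation that only the subadditivity direction is needed for the strong converse is correct, but the lemma as stated is an equality.)
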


\begin{proof}
The inequality below is straightforward%
\begin{equation}
\widetilde{I}_{\alpha}\left(  \mathcal{N}_{1}\otimes\mathcal{N}_{2}\right)
\geq\widetilde{I}_{\alpha}\left(  \mathcal{N}_{1}\right)  +\widetilde
{I}_{\alpha}\left(  \mathcal{N}_{2}\right)  ,
\end{equation}
following from \cite[Theorem~11]{B13monotone} and the fact that we can choose
tensor-product states as a special case of the optimization on the left hand side.

We now prove the other inequality:%
\begin{equation}
\widetilde{I}_{\alpha}\left(  \mathcal{N}_{1}\otimes\mathcal{N}_{2}\right)
\leq\widetilde{I}_{\alpha}\left(  \mathcal{N}_{1}\right)  +\widetilde
{I}_{\alpha}\left(  \mathcal{N}_{2}\right)  .
\end{equation}
From the above lemmas, we can reexpress $\widetilde{I}_{\alpha}\left(
\mathcal{N}\right)  $\ as%
\begin{equation}
\min_{\sigma_{B}}\max_{\rho_{A}}\frac{\alpha}{\alpha-1}\log\left\Vert
\sigma_{B}^{\frac{1-\alpha}{2\alpha}}\mathcal{N}_{A^{\prime}\rightarrow
B}\left(  \rho_{A}^{\frac{1}{2\alpha}}\Gamma_{AA^{\prime}}\rho_{A}^{\frac
{1}{2\alpha}}\right)  \sigma_{B}^{\frac{1-\alpha}{2\alpha}}\right\Vert
_{\alpha}.
\end{equation}
Defining the CP\ map $\Theta_{\sigma}\left(  X\right)  =\sigma^{1/2}%
X\sigma^{1/2}$, we can write the above as%
\begin{multline}
\min_{\sigma_{B}}\frac{\alpha}{\alpha-1}\log\max_{\rho_{A}}\left\Vert \left(
\Theta_{\sigma^{\frac{1-\alpha}{\alpha}}}\circ\mathcal{N}_{A^{\prime
}\rightarrow B}\right)  \left(  \rho_{A}^{\frac{1}{2\alpha}}\Gamma
_{AA^{\prime}}\rho_{A}^{\frac{1}{2\alpha}}\right)  \right\Vert _{\alpha}\\
=\min_{\sigma_{B}}\frac{\alpha}{\alpha-1}\log\left\Vert \Theta_{\sigma
^{\frac{1-\alpha}{\alpha}}}\circ\mathcal{N}_{A^{\prime}\rightarrow
B}\right\Vert _{\text{CB},1\rightarrow\alpha}.
\end{multline}
The equality follows from \cite[Theorem~10]{DJKR06}, in which these authors
showed that the $\left(  \text{CB},1\rightarrow\alpha\right)  $\ norm of a
CP\ map $\mathcal{M}_{A^{\prime}\rightarrow B}$ is equal to%
\begin{align}
\left\Vert \mathcal{M}_{A^{\prime}\rightarrow B}\right\Vert _{\text{CB}%
,1\rightarrow\alpha} &  \equiv\sup_{X>0}\frac{\left\Vert \left(  X\otimes
I\right)  \mathcal{M}_{A^{\prime}\rightarrow B}\left(  \Gamma_{AA^{\prime}%
}\right)  \left(  X\otimes I\right)  \right\Vert _{\alpha}}{\left\Vert
X^{2}\right\Vert _{\alpha}}\\
&  =\sup_{Y>0,\text{Tr}\left\{  Y\right\}  \leq1}\left\Vert \left(
Y^{\frac{1}{2\alpha}}\otimes I\right)  \mathcal{M}_{A^{\prime}\rightarrow
B}\left(  \Gamma_{AA^{\prime}}\right)  \left(  Y^{\frac{1}{2\alpha}}\otimes
I\right)  \right\Vert _{\alpha}%
\end{align}
With this result in hand, defining%
\begin{equation}
\Gamma_{A_{1}A_{2}B_{1}B_{2}}^{\mathcal{N}_{1}\otimes\mathcal{N}_{1}}%
\equiv\left(  \mathcal{N}_{1}\otimes\mathcal{N}_{2}\right)  \left(
\Gamma_{A_{1}A_{2}A_{1}^{\prime}A_{2}^{\prime}}\right)  ,
\end{equation}
we can now prove the other inequality:%
\begin{align}
\widetilde{I}_{\alpha}\left(  \mathcal{N}_{1}\otimes\mathcal{N}_{2}\right)
&  =\max_{\rho_{A_{1}A_{2}}}\min_{\sigma_{B_{1}B_{2}}}\widetilde{D}_{\alpha
}\left(  \rho_{A_{1}A_{2}}^{1/2}\Gamma_{A_{1}A_{2}B_{1}B_{2}}^{\mathcal{N}%
_{1}\otimes\mathcal{N}_{1}}\rho_{A_{1}A_{2}}^{1/2}\middle\Vert\rho_{A_{1}%
A_{2}}\otimes\sigma_{B_{1}B_{2}}\right)  \\
&  =\min_{\sigma_{B_{1}B_{2}}}\max_{\rho_{A_{1}A_{2}}}\widetilde{D}_{\alpha
}\left(  \rho_{A_{1}A_{2}}^{1/2}\Gamma_{A_{1}A_{2}B_{1}B_{2}}^{\mathcal{N}%
_{1}\otimes\mathcal{N}_{1}}\rho_{A_{1}A_{2}}^{1/2}\middle\Vert\rho_{A_{1}%
A_{2}}\otimes\sigma_{B_{1}B_{2}}\right)  \\
&  \leq\min_{\sigma_{B_{1}}\otimes\sigma_{B_{2}}}\max_{\rho_{A_{1}A_{2}}%
}\widetilde{D}_{\alpha}\left(  \rho_{A_{1}A_{2}}^{1/2}\Gamma_{A_{1}A_{2}%
B_{1}B_{2}}^{\mathcal{N}_{1}\otimes\mathcal{N}_{1}}\rho_{A_{1}A_{2}}%
^{1/2}\middle\Vert\rho_{A_{1}A_{2}}\otimes\sigma_{B_{1}}\otimes\sigma_{B_{2}%
}\right)  \\
&  =\frac{\alpha}{\alpha-1}\log\min_{\sigma_{B_{1}}\otimes\sigma_{B_{2}}%
}\left\Vert \left(  \Theta_{\sigma_{B_{1}}^{\frac{1-\alpha}{\alpha}}}%
\circ\mathcal{N}_{1}\right)  \otimes\left(  \Theta_{\sigma_{B_{2}}%
^{\frac{1-\alpha}{\alpha}}}\circ\mathcal{N}_{2}\right)  \right\Vert
_{\text{CB},1\rightarrow\alpha}%
\end{align}
Using \cite[Theorem 11]{DJKR06} (the main result there), we find that the
above is equal to%
\begin{align}
&  =\frac{\alpha}{\alpha-1}\log\min_{\sigma_{B_{1}}\otimes\sigma_{B_{2}}%
}\left\Vert \left(  \Theta_{\sigma_{B_{1}}^{\frac{1-\alpha}{\alpha}}}%
\circ\mathcal{N}_{1}\right)  \right\Vert _{\text{CB},1\rightarrow\alpha
}\left\Vert \left(  \Theta_{\sigma_{B_{2}}^{\frac{1-\alpha}{\alpha}}}%
\circ\mathcal{N}_{2}\right)  \right\Vert _{\text{CB},1\rightarrow\alpha}\\
&  =\frac{\alpha}{\alpha-1}\log\min_{\sigma_{B_{1}}}\left\Vert \left(
\Theta_{\sigma_{B_{1}}^{\frac{1-\alpha}{\alpha}}}\circ\mathcal{N}_{1}\right)
\right\Vert _{\text{CB},1\rightarrow\alpha}\min_{\sigma_{B_{2}}}\left\Vert
\left(  \Theta_{\sigma_{B_{2}}^{\frac{1-\alpha}{\alpha}}}\circ\mathcal{N}%
_{2}\right)  \right\Vert _{\text{CB},1\rightarrow\alpha}\\
&  =\frac{\alpha}{\alpha-1}\log\min_{\sigma_{B_{1}}}\left\Vert \left(
\Theta_{\sigma_{B_{1}}^{\frac{1-\alpha}{\alpha}}}\circ\mathcal{N}_{1}\right)
\right\Vert _{\text{CB},1\rightarrow\alpha}+\frac{\alpha}{\alpha-1}\log
\min_{\sigma_{B_{2}}}\left\Vert \left(  \Theta_{\sigma_{B_{2}}^{\frac
{1-\alpha}{\alpha}}}\circ\mathcal{N}_{2}\right)  \right\Vert _{\text{CB}%
,1\rightarrow\alpha}\\
&  =\widetilde{I}_{\alpha}\left(  \mathcal{N}_{1}\right)  +\widetilde
{I}_{\alpha}\left(  \mathcal{N}_{2}\right)  .
\end{align}

\end{proof}

\section{Final steps for the strong converse}

\label{sec:final-steps-EB}In this section, we outline the remaining steps to
prove the strong converse theorem. Returning to~(\ref{eq:error-bound-a-holevo}%
), the additivity relation from Lemma~\ref{lem:subadd-EA} (along with an
inductive argument) allows us to conclude the following upper bound on the
success probability when using an entanglement-assisted code to communicate
over a quantum channel $\mathcal{N}$:%
\begin{equation}
p_{\text{succ}}\leq2^{-n\sup_{\alpha>1}\left(  \frac{\alpha-1}{\alpha}\right)
\left(  R-\widetilde{I}_{\alpha}\left(  \mathcal{N}\right)  \right)
}.\label{eq:improved-bound-succ-prob}%
\end{equation}
The quantity $\sup_{\alpha>1}\left(  \frac{\alpha-1}{\alpha}\right)  \left(
R-\widetilde{I}_{\alpha}\left(  \mathcal{N}\right)  \right)  $ is thus our
bound on the strong converse exponent, which holds for all $n\geq1$. 

Recall the definition of the (traditional) quantum R\'{e}nyi relative entropy
for $\alpha\in\left(  1,\infty\right)  $:%
\begin{equation}
D_{\alpha}\left(  \rho\Vert\sigma\right)  \equiv\frac{1}{\alpha-1}%
\log\text{Tr}\left\{  \rho^{\alpha}\sigma^{1-\alpha}\right\}
,\label{eq:traditional-renyi-entropy}%
\end{equation}
whenever the support of $\rho$ is in the support of $\sigma$ and it is equal
to $+\infty$ otherwise. We define the R\'{e}nyi quantum mutual information of a
bipartite state $\rho_{AB}$ as follows:%
\begin{equation}
{I}_{\alpha}\left(  A;B\right)  _{\rho}=\min_{\sigma_{B}}D_{\alpha}\left(
\rho_{AB}\Vert\rho_{A}\otimes\sigma_{B}\right)  .
\end{equation}

By applying the following Lieb-Thirring trace inequality \cite{LT76,Carlen09},
which holds for $B\geq0$, any operator $C$, and for $\alpha\geq1$:%
\begin{equation}
\text{Tr}\{(CBC^{\dag})^{\alpha}\}\leq\text{Tr}\{(C^{\dag}C)^{\alpha}%
B^{\alpha}\},\label{eq:lieb-thirring}%
\end{equation}
we find that the following inequality holds for $\alpha>1$ \cite{WWY13}:%
\begin{equation}
\widetilde{D}_{\alpha}\left(  \rho\Vert\sigma\right)  \leq D_{\alpha}\left(
\rho\Vert\sigma\right)  .\label{eq:ineq-d-tilde-alpha-d-alpha}%
\end{equation}
This in turn implies the following upper bound on the success probability of
any entanglement-assisted code for all $\alpha\in(1,\infty)$ by combining
(\ref{eq:ineq-d-tilde-alpha-d-alpha}) with (\ref{eq:improved-bound-succ-prob}%
):
\begin{equation}
p_{\text{succ}}\leq2^{-n\left(  \frac{\alpha-1}{\alpha}\right)  \left(
R-{I}_{\alpha}\left(  \mathcal{N}\right)  \right)  },\label{eq:new-succ-prob}%
\end{equation}
where we define ${I}_{\alpha}\left(  \mathcal{N}\right)  $ according to the
recipe in (\ref{eq:recipe-first})-(\ref{eq:recipe-last}) with the divergence
set to (\ref{eq:traditional-renyi-entropy}).

We next prove the following ``quantum Sibson identity,'' which will be helpful
in obtaining an explicit form for the R\'enyi quantum mutual information (see
\cite{SW12} for a variant which is relevant for R\'enyi coherent information):

\begin{lemma}
[Quantum Sibson identity]\label{lem:q-sibson}The following quantum Sibson
identity holds for $\alpha\in(1,\infty)$%
\begin{equation}
D_{\alpha}\left(  \rho_{AB}\Vert\rho_{A}\otimes\sigma_{B}\right)  =D_{\alpha
}\left(  \sigma_{B}^{\ast}\Vert\sigma_{B}\right)  +D_{\alpha}\left(  \rho
_{AB}\Vert\rho_{A}\otimes\sigma_{B}^{\ast}\right)  ,\label{eq:sibson}%
\end{equation}
where $\sigma_{B}^{\ast}$ is defined as%
\begin{equation}
\sigma_{B}^{\ast}\equiv\frac{\left(  \operatorname{Tr}_{A}\left\{  \rho
_{A}^{1-\alpha}\rho_{AB}^{\alpha}\right\}  \right)  ^{1/\alpha}}%
{\operatorname{Tr}\left\{  \left(  \operatorname{Tr}_{A}\left\{  \rho
_{A}^{1-\alpha}\rho_{AB}^{\alpha}\right\}  \right)  ^{1/\alpha}\right\}  }\ .
\end{equation}

\end{lemma}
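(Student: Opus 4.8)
The plan is to prove the identity by a direct algebraic computation that reduces every term to a scalar quantity living on system $B$. First I would name the unnormalized operator appearing inside $\sigma_B^\ast$, writing $\widetilde{\sigma}_B \equiv \operatorname{Tr}_A\{\rho_A^{1-\alpha}\rho_{AB}^\alpha\}$ and $Z \equiv \operatorname{Tr}\{\widetilde{\sigma}_B^{1/\alpha}\}$, so that $\sigma_B^\ast = \widetilde{\sigma}_B^{1/\alpha}/Z$ and hence $(\sigma_B^\ast)^\alpha = \widetilde{\sigma}_B/Z^\alpha$. Two elementary facts drive the whole argument: first, that powers of a product operator factor as $(\rho_A\otimes\tau_B)^{1-\alpha}=\rho_A^{1-\alpha}\otimes\tau_B^{1-\alpha}$; and second, that the partial trace $\operatorname{Tr}_A$ is cyclic with respect to any operator supported on $A$, which yields the reduction
\begin{equation}
\operatorname{Tr}\{\rho_{AB}^\alpha(\rho_A^{1-\alpha}\otimes X_B)\}=\operatorname{Tr}_B\{\widetilde{\sigma}_B X_B\}
\end{equation}
for an arbitrary operator $X_B$ on $B$.

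Applying this reduction with $X_B=\sigma_B^{1-\alpha}$ immediately rewrites the left-hand side as $D_\alpha(\rho_{AB}\Vert\rho_A\otimes\sigma_B)=\tfrac{1}{\alpha-1}\log\operatorname{Tr}_B\{\widetilde{\sigma}_B\sigma_B^{1-\alpha}\}$. For the first term on the right I would substitute $(\sigma_B^\ast)^\alpha=\widetilde{\sigma}_B/Z^\alpha$ to get
\begin{equation}
D_\alpha(\sigma_B^\ast\Vert\sigma_B)=\frac{1}{\alpha-1}\left[\log\operatorname{Tr}_B\{\widetilde{\sigma}_B\sigma_B^{1-\alpha}\}-\alpha\log Z\right].
\end{equation}
For the second term I would again use the reduction, now with $X_B=(\sigma_B^\ast)^{1-\alpha}=Z^{\alpha-1}\widetilde{\sigma}_B^{(1-\alpha)/\alpha}$, and exploit the exponent identity $1+\tfrac{1-\alpha}{\alpha}=\tfrac{1}{\alpha}$, which collapses the trace to the normalization:
\begin{equation}
\operatorname{Tr}_B\{\widetilde{\sigma}_B(\sigma_B^\ast)^{1-\alpha}\}=Z^{\alpha-1}\operatorname{Tr}_B\{\widetilde{\sigma}_B^{1/\alpha}\}=Z^\alpha,
\end{equation}
so that $D_\alpha(\rho_{AB}\Vert\rho_A\otimes\sigma_B^\ast)=\tfrac{\alpha}{\alpha-1}\log Z$. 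Adding the two right-hand terms, the $-\alpha\log Z$ and $+\alpha\log Z$ contributions cancel, leaving exactly $\tfrac{1}{\alpha-1}\log\operatorname{Tr}_B\{\widetilde{\sigma}_B\sigma_B^{1-\alpha}\}$, which matches the left-hand side.

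The computation carries no serious conceptual difficulty; the one step that must be handled with care is the partial-trace reduction, where one has to verify that $\rho_A^{1-\alpha}\otimes I_B$ can be pulled through $\operatorname{Tr}_A$ and merged with $\rho_{AB}^\alpha$ independently of operator ordering (this is the statement that $\operatorname{Tr}_A\{\rho_A^{1-\alpha}\rho_{AB}^\alpha\}=\operatorname{Tr}_A\{\rho_{AB}^\alpha\,(\rho_A^{1-\alpha}\otimes I_B)\}$, and that an operator acting on $B$ commutes with $\operatorname{Tr}_A$). The restriction $\alpha\in(1,\infty)$ enters only through the support conditions guaranteeing finiteness of each $D_\alpha$; on the support of the relevant operators the fractional powers are well defined and the cancellation above is exact.
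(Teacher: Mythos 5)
Your proof is correct and follows essentially the same route as the paper's: both reduce every term via the partial-trace/cyclicity identity to traces of $\widetilde{\sigma}_B=\operatorname{Tr}_A\{\rho_A^{1-\alpha}\rho_{AB}^{\alpha}\}$ against powers of $\sigma_B$ or $\sigma_B^{\ast}$, use $(\sigma_B^{\ast})^{\alpha}=\widetilde{\sigma}_B/Z^{\alpha}$, and observe that the second divergence collapses to $\tfrac{\alpha}{\alpha-1}\log Z$ so the $\log Z$ terms cancel. The only cosmetic difference is that you compute the two right-hand terms separately and add, whereas the paper rewrites the left-hand side as the first term plus a remainder; the algebra is identical.
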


\begin{proof}
It is clear that $\sigma_{B}^{\ast}$ is a positive operator because $\rho
_{A}^{\frac{1-\alpha}{2}}\rho_{AB}^{\alpha}\rho_{A}^{\frac{1-\alpha}{2}}$ is
positive, and the partial trace maintains positivity while being equal to
Tr$_{A}\left\{  \rho_{A}^{1-\alpha}\rho_{AB}^{\alpha}\right\}  $ from
cyclicity. The above relation then implies that%
\begin{equation}
\left(  \sigma_{B}^{\ast}\text{Tr}\left\{  \left(  \text{Tr}_{A}\left\{
\rho_{A}^{1-\alpha}\rho_{AB}^{\alpha}\right\}  \right)  ^{1/\alpha}\right\}
\right)  ^{\alpha}=\text{Tr}_{A}\left\{  \rho_{A}^{1-\alpha}\rho_{AB}^{\alpha
}\right\}  .
\end{equation}
We can then expand $D_{\alpha}\left(  \rho_{AB}\Vert\rho_{A}\otimes\sigma
_{B}\right)  $ as follows:%
\begin{align}
D_{\alpha}\left(  \rho_{AB}\Vert\rho_{A}\otimes\sigma_{B}\right)   &
=\frac{1}{\alpha-1}\log\text{Tr}\left\{  \rho_{AB}^{\alpha}\left(  \rho
_{A}^{1-\alpha}\otimes\sigma_{B}^{1-\alpha}\right)  \right\}  \\
&  =\frac{1}{\alpha-1}\log\text{Tr}\left\{  \text{Tr}_{A}\left\{  \rho
_{A}^{1-\alpha}\rho_{AB}^{\alpha}\right\}  \ \sigma_{B}^{1-\alpha}\right\}  \\
&  =\frac{1}{\alpha-1}\log\text{Tr}\left\{  \left(  \sigma_{B}^{\ast
}\ \text{Tr}\left\{  \left(  \text{Tr}_{A}\left\{  \rho_{A}^{1-\alpha}%
\rho_{AB}^{\alpha}\right\}  \right)  ^{1/\alpha}\right\}  \right)  ^{\alpha
}\ \sigma_{B}^{1-\alpha}\right\}  \\
&  =\frac{1}{\alpha-1}\left[  \log\text{Tr}\left\{  \left(  \sigma_{B}^{\ast
}\right)  ^{\alpha}\ \sigma_{B}^{1-\alpha}\right\}  +\alpha\log\text{Tr}%
\left\{  \left(  \text{Tr}_{A}\left\{  \rho_{A}^{1-\alpha}\rho_{AB}^{\alpha
}\right\}  \right)  ^{1/\alpha}\right\}  \right]  \\
&  =D_{\alpha}\left(  \sigma_{B}^{\ast}\Vert\sigma_{B}\right)  +\frac{\alpha
}{\alpha-1}\log\text{Tr}\left\{  \left(  \text{Tr}_{A}\left\{  \rho
_{A}^{1-\alpha}\rho_{AB}^{\alpha}\right\}  \right)  ^{1/\alpha}\right\}
.\label{eq:sibson-1}%
\end{align}
Now consider expanding $D_{\alpha}\left(  \rho_{AB}\Vert\rho_{A}\otimes
\sigma_{B}^{\ast}\right)  $:%
\begin{align}
&  D_{\alpha}\left(  \rho_{AB}\Vert\rho_{A}\otimes\sigma_{B}^{\ast}\right)
\nonumber\\
&  =\frac{1}{\alpha-1}\log\text{Tr}\left\{  \text{Tr}_{A}\left\{  \rho
_{A}^{1-\alpha}\rho_{AB}^{\alpha}\right\}  \ \left(  \sigma_{B}^{\ast}\right)
^{1-\alpha}\right\}  \\
&  =\frac{1}{\alpha-1}\log\text{Tr}\left\{  \text{Tr}_{A}\left\{  \rho
_{A}^{1-\alpha}\rho_{AB}^{\alpha}\right\}  \ \left(  \left(  \text{Tr}%
_{A}\left\{  \rho_{A}^{1-\alpha}\rho_{AB}^{\alpha}\right\}  \right)
^{1/\alpha}\right)  ^{1-\alpha}\right\}  +\log\left(  \text{Tr}\left\{
\left(  \text{Tr}_{A}\left\{  \rho_{A}^{1-\alpha}\rho_{AB}^{\alpha}\right\}
\right)  ^{1/\alpha}\right\}  \right)  \\
&  =\frac{1}{\alpha-1}\log\text{Tr}\left\{  \text{Tr}_{A}\left\{  \rho
_{A}^{1-\alpha}\rho_{AB}^{\alpha}\right\}  \ \left(  \text{Tr}_{A}\left\{
\rho_{A}^{1-\alpha}\rho_{AB}^{\alpha}\right\}  \right)  ^{\frac{1-\alpha
}{\alpha}}\right\}  +\log\left(  \text{Tr}\left\{  \left(  \text{Tr}%
_{A}\left\{  \rho_{A}^{1-\alpha}\rho_{AB}^{\alpha}\right\}  \right)
^{1/\alpha}\right\}  \right)  \\
&  =\frac{1}{\alpha-1}\log\text{Tr}\left\{  \left(  \text{Tr}_{A}\left\{
\rho_{A}^{1-\alpha}\rho_{AB}^{\alpha}\right\}  \right)  ^{\frac{1}{\alpha}%
}\right\}  +\log\left(  \text{Tr}\left\{  \left(  \text{Tr}_{A}\left\{
\rho_{A}^{1-\alpha}\rho_{AB}^{\alpha}\right\}  \right)  ^{1/\alpha}\right\}
\right)  \\
&  =\frac{\alpha}{\alpha-1}\log\text{Tr}\left\{  \left(  \text{Tr}_{A}\left\{
\rho_{A}^{1-\alpha}\rho_{AB}^{\alpha}\right\}  \right)  ^{\frac{1}{\alpha}%
}\right\}  .\label{eq:sibson-2}%
\end{align}
Combining (\ref{eq:sibson-1}) and (\ref{eq:sibson-2}) gives (\ref{eq:sibson}).
\end{proof}

By exploiting Lemma~\ref{lem:q-sibson}, we obtain an explicit form for it:

\begin{corollary}
\label{lem:Renyi-mutual-explicit-relation}The R\'{e}nyi quantum mutual
information has an explicit form for $\alpha\in(1,\infty)$ given by%
\begin{equation}
{I}_{\alpha}\left(  A;B\right)  _{\rho}=\frac{\alpha}{\alpha-1}\log
\operatorname{Tr}\left\{  \left(  \operatorname{Tr}_{A}\left\{  \rho
_{A}^{1-\alpha}\rho_{AB}^{\alpha}\right\}  \right)  ^{\frac{1}{\alpha}%
}\right\}
\end{equation}

\end{corollary}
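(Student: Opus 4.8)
The plan is to obtain the explicit formula for the Rényi quantum mutual information by combining the Sibson identity (Lemma~\ref{lem:q-sibson}) with the fact that the first term on the right-hand side of~(\ref{eq:sibson}) is a relative entropy, which is nonnegative and vanishes precisely when its two arguments coincide. Starting from the definition
\begin{equation*}
{I}_{\alpha}\left(  A;B\right)  _{\rho}=\min_{\sigma_{B}}D_{\alpha}\left(
\rho_{AB}\Vert\rho_{A}\otimes\sigma_{B}\right)  ,
\end{equation*}
I would substitute the Sibson decomposition on the right to rewrite the objective as
\begin{equation*}
D_{\alpha}\left(  \rho_{AB}\Vert\rho_{A}\otimes\sigma_{B}\right)  =D_{\alpha
}\left(  \sigma_{B}^{\ast}\Vert\sigma_{B}\right)  +\frac{\alpha}{\alpha
-1}\log\operatorname{Tr}\left\{  \left(  \operatorname{Tr}_{A}\left\{
\rho_{A}^{1-\alpha}\rho_{AB}^{\alpha}\right\}  \right)  ^{\frac{1}{\alpha}
}\right\}  ,
\end{equation*}
where the crucial point is that the \emph{second} term does not depend on $\sigma_{B}$ at all---it is exactly the expression~(\ref{eq:sibson-2}) derived inside the proof of the Sibson identity, and it is precisely the claimed formula for ${I}_{\alpha}\left(A;B\right)_{\rho}$.

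**Next I would** carry out the minimization over $\sigma_{B}$. Since the second term is a constant, minimizing $D_{\alpha}\left(\rho_{AB}\Vert\rho_{A}\otimes\sigma_{B}\right)$ over $\sigma_{B}$ reduces to minimizing the first term, $D_{\alpha}\left(\sigma_{B}^{\ast}\Vert\sigma_{B}\right)$. Here I would invoke nonnegativity of the Rényi relative entropy for $\alpha\in(1,\infty)$ together with the fact that it attains its minimum value of zero if and only if the two states are equal. Because $\sigma_{B}^{\ast}$ is itself a density operator (it is positive by the argument in the proof of Lemma~\ref{lem:q-sibson}, and normalized by construction), the choice $\sigma_{B}=\sigma_{B}^{\ast}$ is a feasible minimizer that drives the first term to zero. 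Therefore
\begin{equation*}
\min_{\sigma_{B}}D_{\alpha}\left(  \rho_{AB}\Vert\rho_{A}\otimes\sigma
_{B}\right)  =\frac{\alpha}{\alpha-1}\log\operatorname{Tr}\left\{  \left(
\operatorname{Tr}_{A}\left\{  \rho_{A}^{1-\alpha}\rho_{AB}^{\alpha}\right\}
\right)  ^{\frac{1}{\alpha}}\right\}  ,
\end{equation*}
which is exactly the desired explicit form.

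**The main subtlety** I expect is ensuring that the minimizer $\sigma_{B}^{\ast}$ genuinely lies in the feasible set and that nonnegativity of $D_{\alpha}$ applies cleanly. The positivity and normalization of $\sigma_{B}^{\ast}$ are already handled in the proof of the Sibson lemma, so the one remaining point is to justify $D_{\alpha}\left(\sigma_{B}^{\ast}\Vert\sigma_{B}\right)\geq0$ with equality iff $\sigma_{B}=\sigma_{B}^{\ast}$. This follows from the standard nonnegativity and faithfulness of the Rényi relative entropy for $\alpha>1$ (itself a consequence of monotonicity under the trace, or of the operator Jensen inequality applied to $x^{\alpha}$), so there is essentially no hard analytic obstacle---the entire content of the corollary is the bookkeeping that isolates the $\sigma_{B}$-independent term. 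Everything else is a direct consequence of Lemma~\ref{lem:q-sibson}, so this proof is genuinely short.
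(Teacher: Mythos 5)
Your proof is correct and follows essentially the same route as the paper: both substitute the Sibson decomposition of Lemma~\ref{lem:q-sibson} into the minimization, observe that the $\sigma_{B}$-independent term is the claimed expression, and drop the term $D_{\alpha}\left(\sigma_{B}^{\ast}\Vert\sigma_{B}\right)$ by choosing $\sigma_{B}=\sigma_{B}^{\ast}$ and invoking nonnegativity of the R\'{e}nyi relative entropy. The only difference is that you make explicit the nonnegativity and feasibility arguments that the paper leaves implicit, which is a harmless (indeed welcome) elaboration.
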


\begin{proof}
From the identity in Lemma~\ref{lem:q-sibson}, we can conclude that%
\begin{align}
{I}_{\alpha}\left(  A;B\right)  _{\rho}  &  =\min_{\sigma_{B}}D_{\alpha
}\left(  \rho_{AB}\|\rho_{A}\otimes\sigma_{B}\right) \\
&  =\min_{\sigma_{B}}\left[  D_{\alpha}\left(  \sigma_{B}^{\ast}\|\sigma
_{B}\right)  +D_{\alpha}\left(  \rho_{AB}\|\rho_{A}\otimes\sigma_{B}^{\ast
}\right)  \right] \\
&  =D_{\alpha}\left(  \rho_{AB}\|\rho_{A}\otimes\sigma_{B}^{\ast}\right) \\
&  =\frac{\alpha}{\alpha-1}\log\text{Tr}\left\{  \left(  \text{Tr}_{A}\left\{
\rho_{A}^{1-\alpha}\rho_{AB}^{\alpha}\right\}  \right)  ^{\frac{1}{\alpha}%
}\right\}  .
\end{align}

\end{proof}

The following lemma will be helpful for us in relating R\'enyi quantum mutual
information to the von Neumann quantum mutual information:

\begin{lemma}
\label{lem:converge-to-I} The following identity holds for a bipartite state
$\rho_{AB}$:%
\begin{equation}
\lim_{\alpha\searrow1}\left[  \frac{\partial}{\partial\alpha}\log
\operatorname{Tr}\left\{  \left(  \operatorname{Tr}_{A}\left\{  \rho
_{A}^{1-\alpha}\rho_{AB}^{\alpha}\right\}  \right)  ^{\frac{1}{\alpha}%
}\right\}  \right]  =I\left(  A;B\right)  _{\rho}\ .
\end{equation}

\end{lemma}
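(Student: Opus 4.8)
The plan is to read the claim as a statement about the derivative at $\alpha=1$ of a single function that turns out to be real-analytic near $1$. Define the operator $M(\alpha)\equiv\operatorname{Tr}_A\{\rho_A^{1-\alpha}\rho_{AB}^\alpha\}$ on system $B$ (with $\rho_A^{1-\alpha}$ understood as $\rho_A^{1-\alpha}\otimes I_B$ and interpreted on the support of $\rho_A$), and set $g(\alpha)\equiv\operatorname{Tr}\{M(\alpha)^{1/\alpha}\}$, so that the quantity inside the limit is $\partial_\alpha\log g(\alpha)=g'(\alpha)/g(\alpha)$. The first observation is that at $\alpha=1$ the factor $\rho_A^{1-\alpha}$ becomes the support projector of $\rho_A$, which acts as the identity on $\rho_{AB}$, so $M(1)=\operatorname{Tr}_A\{\rho_{AB}\}=\rho_B$ and hence $g(1)=\operatorname{Tr}\{\rho_B\}=1$. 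In finite dimensions all the operators involved are analytic in $\alpha$ on the relevant supports, so $g'$ is continuous near $\alpha=1$ and $\lim_{\alpha\searrow1}g'(\alpha)/g(\alpha)=g'(1)/g(1)=g'(1)$. It therefore suffices to compute $g'(1)$.

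To compute $g'(1)$ I would isolate the two distinct ways $\alpha$ enters $g(\alpha)=\operatorname{Tr}\{M(\alpha)^{1/\alpha}\}$: through the operator $M$ and through the exponent $1/\alpha$. Writing $G(\alpha,\beta)\equiv\operatorname{Tr}\{M(\alpha)^{1/\beta}\}$ so that $g(\alpha)=G(\alpha,\alpha)$, the chain rule gives $g'(1)=\partial_\alpha G|_{(1,1)}+\partial_\beta G|_{(1,1)}$. For the first term, the standard trace-derivative identity $\tfrac{d}{d\alpha}\operatorname{Tr}\{M(\alpha)^p\}=p\operatorname{Tr}\{M(\alpha)^{p-1}M'(\alpha)\}$ at $p=1$ gives $\partial_\alpha G|_{(1,1)}=\operatorname{Tr}\{M'(1)\}$. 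For the second term, with $M=\rho_B$ held fixed, $\tfrac{d}{d\beta}\rho_B^{1/\beta}=-\beta^{-2}\rho_B^{1/\beta}\log\rho_B$, so $\partial_\beta G|_{(1,1)}=-\operatorname{Tr}\{\rho_B\log\rho_B\}=H(B)_\rho$.

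It then remains to evaluate $\operatorname{Tr}\{M'(1)\}$, which I would do by noting that the full trace of $M(\alpha)$ equals $\phi(\alpha)\equiv\operatorname{Tr}\{\rho_A^{1-\alpha}\rho_{AB}^\alpha\}$, so $\operatorname{Tr}\{M'(1)\}=\phi'(1)$. Differentiating directly with $\tfrac{d}{d\alpha}\rho_A^{1-\alpha}=-\rho_A^{1-\alpha}\log\rho_A$ and $\tfrac{d}{d\alpha}\rho_{AB}^{\alpha}=\rho_{AB}^{\alpha}\log\rho_{AB}$ and evaluating at $\alpha=1$ yields $\phi'(1)=-\operatorname{Tr}\{\rho_{AB}(\log\rho_A\otimes I_B)\}+\operatorname{Tr}\{\rho_{AB}\log\rho_{AB}\}=-\operatorname{Tr}\{\rho_A\log\rho_A\}+\operatorname{Tr}\{\rho_{AB}\log\rho_{AB}\}=H(A)_\rho-H(AB)_\rho$. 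Combining the pieces gives $g'(1)=\operatorname{Tr}\{M'(1)\}+H(B)_\rho=H(A)_\rho+H(B)_\rho-H(AB)_\rho=I(A;B)_\rho$, which is the claim.

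The hard part will be the rigorous justification of differentiating the operator-valued power $M(\alpha)^{1/\alpha}$ under the trace while the operator \emph{and} the exponent both vary with $\alpha$, together with the domain issue when $\rho_A$ or $\rho_{AB}$ fail to be full rank, in which case $\rho_A^{1-\alpha}$ must be read on the support of $\rho_A$; this is precisely where the support-containment condition ensuring finiteness of $I(A;B)_\rho$ is used. Everything else reduces to a clean first-order expansion, made especially tractable by the two fortunate normalizations $M(1)=\rho_B$ and $g(1)=1$, which remove any denominator complications in $g'(1)/g(1)$.
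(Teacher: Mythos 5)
Your proposal is correct and follows essentially the same route as the paper: your two-variable function $G(\alpha,\beta)=\operatorname{Tr}\{M(\alpha)^{1/\beta}\}$ is exactly the paper's $f(\beta_1,\beta_2)$ after the substitution $\alpha=1+\beta_1$, $\beta=1+\beta_2$, and the chain-rule split of $g'(1)$ into the two partial derivatives, evaluated to give $H(A)-H(AB)$ and $H(B)$ respectively, is the same decomposition the paper carries out via first-order Taylor expansions.
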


\begin{proof}
A proof follows by exploiting some ideas from \cite{CL08} and \cite{ON99}. It
suffices to show that%
\begin{equation}
\lim_{\alpha\searrow1}\left[  \frac{\partial}{\partial\alpha}\log
\operatorname{Tr}\left\{  \left(  \operatorname{Tr}_{A}\left\{  \rho
_{A}^{1-\alpha}\rho_{AB}^{\alpha}\right\}  \right)  ^{\frac{1}{\alpha}%
}\right\}  \right]  =-\text{Tr}\left\{  \rho_{A}\log\rho_{A}\right\}
-\text{Tr}\left\{  \rho_{B}\log\rho_{B}\right\}  +\text{Tr}\left\{  \rho
_{AB}\log\rho_{AB}\right\}  .
\end{equation}
(In this proof, we will take $\log$ to denote the natural logarithm, but note
that the result follows simply by replacing the natural logarithm in both
definitions with the binary logarithm.)

Let us rewrite the expression inside the trace, using $\alpha=1+\beta$ where
$\beta>0$, as%
\begin{equation}
\text{Tr}\left\{  \left(  \text{Tr}_{A}\left\{  \rho_{A}^{-\beta}\rho
_{AB}^{1+\beta}\right\}  \right)  ^{\frac{1}{1+\beta}}\right\}  .
\end{equation}
Furthermore, we can introduce two parameters $\beta_{1}>0$ and $\beta_{2}>0$,
so that the above expression is a special case of%
\begin{equation}
f\left(  \beta_{1},\beta_{2}\right)  \equiv\text{Tr}\left\{  \left(
\text{Tr}_{A}\left\{  \rho_{A}^{-\beta_{1}}\rho_{AB}^{1+\beta_{1}}\right\}
\right)  ^{\frac{1}{1+\beta_{2}}}\right\}  .
\end{equation}
We then have that%
\begin{align}
\lim_{\alpha\searrow1}\left[  \frac{\partial}{\partial\alpha}\log
\text{Tr}\left\{  \left(  \text{Tr}_{A}\left\{  \rho_{A}^{1-\alpha}\rho
_{AB}^{\alpha}\right\}  \right)  ^{\frac{1}{\alpha}}\right\}  \right]   &
=\frac{\lim_{\beta\searrow0}\left[  \frac{\partial}{\partial\beta}f\left(
\beta,\beta\right)  \right]  }{f\left(  0,0\right)  }\\
&  =\lim_{\beta_{1}\searrow0}\left[  \frac{\partial}{\partial\beta_{1}%
}f\left(  \beta_{1},0\right)  \right]  +\lim_{\beta_{2}\searrow0}\left[
\frac{\partial}{\partial\beta_{2}}f\left(  0,\beta_{2}\right)  \right]  ,
\end{align}
where the second equality follows in part because $f\left(  0,0\right)  =1$.
Now consider the following Taylor expansions around~$\beta=0$:
\begin{align}
X^{-\beta} &  =I-\beta\log X+O\left(  \beta^{2}\right)  ,\\
X^{1+\beta} &  =X+\beta X\log X+O\left(  \beta^{2}\right)  ,\\
X^{\frac{1}{1+\beta}} &  =X-\beta X\log X+O\left(  \beta^{2}\right)  .
\end{align}
From these, we calculate $f\left(  \beta_{1},0\right)  $ as%
\begin{align}
f\left(  \beta_{1},0\right)   &  =\text{Tr}\left\{  \rho_{A}^{-\beta_{1}}%
\rho_{AB}^{1+\beta_{1}}\right\}  \\
&  =\text{Tr}\left\{  \rho_{AB}-\beta_{1}\rho_{AB}\log\rho_{A}+\beta_{1}%
\rho_{AB}\log\rho_{AB}\right\}  +O\left(  \beta_{1}^{2}\right)  \\
&  =\text{Tr}\left\{  \rho_{AB}\right\}  -\beta_{1}\text{Tr}\left\{  \rho
_{AB}\log\rho_{A}\right\}  +\beta_{1}\text{Tr}\left\{  \rho_{AB}\log\rho
_{AB}\right\}  +O\left(  \beta_{1}^{2}\right)  .
\end{align}
It then follows that%
\begin{equation}
\lim_{\beta_{1}\searrow0}\left[  \frac{\partial}{\partial\beta_{1}}f\left(
\beta_{1},0\right)  \right]  =-\text{Tr}\left\{  \rho_{A}\log\rho_{A}\right\}
+\text{Tr}\left\{  \rho_{AB}\log\rho_{AB}\right\}  .
\end{equation}
We then calculate $f\left(  0,\beta_{2}\right)  $ as%
\begin{align}
f\left(  0,\beta_{2}\right)   &  =\text{Tr}\left\{  \left(  \text{Tr}%
_{A}\left\{  \rho_{AB}\right\}  \right)  ^{\frac{1}{1+\beta_{2}}}\right\}  \\
&  =\text{Tr}\left\{  \left(  \rho_{B}\right)  ^{\frac{1}{1+\beta_{2}}%
}\right\}  \\
&  =\text{Tr}\left\{  \rho_{B}\right\}  -\beta_{2}\text{Tr}\left\{  \rho
_{B}\log\rho_{B}\right\}  +O\left(  \beta_{2}^{2}\right)  .
\end{align}
It then follows that%
\begin{equation}
\lim_{\beta_{2}\searrow0}\left[  \frac{\partial}{\partial\beta_{2}}f\left(
0,\beta_{2}\right)  \right]  =-\text{Tr}\left\{  \rho_{B}\log\rho_{B}\right\}
.
\end{equation}
Putting these together, we find that%
\begin{equation}
\lim_{\beta\searrow0}\left[  \frac{\partial}{\partial\beta}f\left(
\beta,\beta\right)  \right]  =-\text{Tr}\left\{  \rho_{A}\log\rho_{A}\right\}
-\text{Tr}\left\{  \rho_{B}\log\rho_{B}\right\}  +\text{Tr}\left\{  \rho
_{AB}\log\rho_{AB}\right\}  =I\left(  A;B\right)  _{\rho}.
\end{equation}

\end{proof}

Let ${I}(\mathcal{N})$ denote the entanglement-assisted capacity of a quantum
channel $\mathcal{N}$, which \cite{Bennett2002} proved is a function of
$\mathcal{N}$
and constructed according to the recipe in (\ref{eq:recipe-first}%
)-(\ref{eq:recipe-last}) with the generalized divergence taken as the quantum
relative entropy $D\left(  \rho\Vert\sigma\right)  $.

\begin{lemma}
\label{lem: rate-capacity-gt-zero} If $R>{I}(\mathcal{N})$ then
\begin{equation}
\exists\;\beta>1,\,\forall\alpha\in\left(  1,\beta\right)  ,\qquad\left(
\frac{\alpha-1}{\alpha}\right)  \left(  R-{I}_{\alpha}\left(  \mathcal{N}%
\right)  \right)  >0. \label{eq:fun-ge-than-zero}%
\end{equation}

\end{lemma}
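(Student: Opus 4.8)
The plan is to reduce the claim to the continuity of $\alpha\mapsto{I}_{\alpha}(\mathcal{N})$ at $\alpha=1$. Since $\alpha>1$ forces $\frac{\alpha-1}{\alpha}>0$, the expression in (\ref{eq:fun-ge-than-zero}) is positive if and only if $R>{I}_{\alpha}(\mathcal{N})$, so it suffices to exhibit a neighborhood $(1,\beta)$ on which ${I}_{\alpha}(\mathcal{N})<R$. Because we are given the strict inequality $R>{I}(\mathcal{N})$, it is then enough to prove that $\lim_{\alpha\searrow1}{I}_{\alpha}(\mathcal{N})={I}(\mathcal{N})$; in fact only the upper bound $\limsup_{\alpha\searrow1}{I}_{\alpha}(\mathcal{N})\le{I}(\mathcal{N})$ is required.

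To establish this limit I would use the explicit form from Corollary~\ref{lem:Renyi-mutual-explicit-relation}. Writing $\omega_{AB}=\rho_{A}^{1/2}\Gamma_{AB}^{\mathcal{N}}\rho_{A}^{1/2}$ (so that $\omega_{A}=\rho_{A}$ because $\mathcal{N}$ is trace preserving) and setting
$$h(\alpha,\rho_{A})\equiv\log\operatorname{Tr}\left\{\left(\operatorname{Tr}_{A}\left\{\rho_{A}^{1-\alpha}\omega_{AB}^{\alpha}\right\}\right)^{1/\alpha}\right\},$$
we have ${I}_{\alpha}(\mathcal{N})=\alpha\,\max_{\rho_{A}}\frac{h(\alpha,\rho_{A})}{\alpha-1}$, where the positive factor $\frac{\alpha}{\alpha-1}$ may be pulled through the maximum. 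Two facts drive the argument. First, $h(1,\rho_{A})=\log\operatorname{Tr}\{\omega_{B}\}=0$ for every $\rho_{A}$, since $\omega_{AB}$ is a density operator. Second, Lemma~\ref{lem:converge-to-I} shows that for each fixed $\rho_{A}$ the right derivative of $h$ at $\alpha=1$ equals the von Neumann mutual information $I(A;B)_{\omega}$, so that $\frac{h(\alpha,\rho_{A})}{\alpha-1}\to I(A;B)_{\omega}$ pointwise. To upgrade this pointwise statement to one about the maximum, I would use $h(1,\rho_{A})=0$ and the fundamental theorem of calculus to write
$$\frac{h(\alpha,\rho_{A})}{\alpha-1}=\int_{0}^{1}\left[\frac{\partial}{\partial s}h(s,\rho_{A})\right]_{s=1+t(\alpha-1)}\,dt,$$
and observe that the integrand is jointly continuous in $(\alpha,\rho_{A})$ because the matrix power functions entering $h$ are continuously differentiable in finite dimensions (with the value at $s=1$ supplied by Lemma~\ref{lem:converge-to-I}). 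Hence $(\alpha,\rho_{A})\mapsto\frac{h(\alpha,\rho_{A})}{\alpha-1}$ extends to a jointly continuous function on a neighborhood of $\{1\}\times\mathcal{S}\left(\mathcal{H}_{A}\right)$ whose value at $\alpha=1$ is $I(A;B)_{\omega}$.

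With joint continuity in hand, I would use the fact that the maximum over a fixed compact set of a jointly continuous function is continuous in the parameter (a standard consequence of uniform continuity on compacta). Since the set of density operators $\rho_{A}$ is compact, this yields $\lim_{\alpha\searrow1}{I}_{\alpha}(\mathcal{N})=\max_{\rho_{A}}I(A;B)_{\omega}={I}(\mathcal{N})$, the last equality being the recipe (\ref{eq:recipe-first})--(\ref{eq:recipe-last}) with $\mathbf{D}=D$. As ${I}(\mathcal{N})<R$, continuity then produces $\beta>1$ with ${I}_{\alpha}(\mathcal{N})<R$ for all $\alpha\in(1,\beta)$, which together with $\frac{\alpha-1}{\alpha}>0$ establishes (\ref{eq:fun-ge-than-zero}).

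The main obstacle is precisely the interchange of the limit $\alpha\searrow1$ with the maximization over $\rho_{A}$, on account of the removable $1/(\alpha-1)$ singularity: one must verify that $\frac{h(\alpha,\rho_{A})}{\alpha-1}\to I(A;B)_{\omega}$ uniformly in $\rho_{A}$, equivalently that the extended function is jointly continuous. A secondary nuisance is the behavior at rank-deficient $\rho_{A}$, where $\rho_{A}^{1-\alpha}$ carries a negative power; this is harmless because the support of $\omega_{AB}$ on system $A$ lies inside that of $\rho_{A}$, so $h$ stays finite, but if one prefers to avoid it one may restrict the maximization to full-rank $\rho_{A}$ (a dense subset on which the relevant functions are smooth) and pass to the boundary by continuity of the von Neumann mutual information. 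Alternatively, one can bypass the maximum-continuity step by combining the monotonicity of $\alpha\mapsto{I}_{\alpha}(A;B)_{\omega}$ (which follows from monotonicity of $D_{\alpha}$ in $\alpha$) with Dini's theorem to obtain the same uniform convergence on the compact set of states.
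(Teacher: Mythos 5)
Your proposal is correct and follows essentially the same route as the paper: both reduce the claim to showing that $\left(\frac{\alpha-1}{\alpha}\right)\left(R-I_{\alpha}(A;B)_{\omega}\right)$ is positive near $\alpha=1$ uniformly over the compact set of input states, using the explicit form from Corollary~\ref{lem:Renyi-mutual-explicit-relation}, the derivative computation of Lemma~\ref{lem:converge-to-I}, and joint continuity of that derivative in $(\alpha,\rho_{A})$. The only difference is packaging --- you phrase the compactness step directly via the fundamental theorem of calculus and uniform continuity, whereas the paper argues by contradiction with a convergent subsequence and the mean value theorem --- and your explicit flagging of the rank-deficient $\rho_{A}$ issue is, if anything, more careful than the paper's bare assertion that $g'(\alpha,\rho)$ is continuous.
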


\begin{proof}
The argument here is very similar to the proof of \cite[Lemma~3]{ON99} and
that of \cite[Lemma~8]{SW12}. We include it here for completeness. Let%
\begin{equation}
g(\alpha,\rho)\equiv\left(  \frac{\alpha-1}{\alpha}\right)  \left(
R-{I}_{\alpha}\left(  A;B\right)  _{\omega}\right)  ,
\end{equation}
where $\omega_{AB}\equiv\rho_{A}^{1/2}\Gamma_{AB}^{\mathcal{N}}\rho_{A}^{1/2}%
$. We can expand $g(\alpha,\rho)$ using
Lemma~\ref{lem:Renyi-mutual-explicit-relation} as
\begin{align}
g(\alpha,\rho) &  =\left(  \frac{\alpha-1}{\alpha}\right)  \left(
R-\frac{\alpha}{\alpha-1}\log\text{Tr}\left\{  \left(  \text{Tr}_{A}\left\{
\rho_{A}^{1-\alpha}\rho_{AB}^{\alpha}\right\}  \right)  ^{\frac{1}{\alpha}%
}\right\}  \right)  \\
&  =\left(  1-\frac{1}{\alpha}\right)  R-\log\text{Tr}\left\{  \left(
\text{Tr}_{A}\left\{  \rho_{A}^{1-\alpha}\rho_{AB}^{\alpha}\right\}  \right)
^{\frac{1}{\alpha}}\right\}  .
\end{align}
Suppose now that $R>I\left(  \mathcal{N}\right)  $, as in the statement of the
lemma. Then for $\forall\rho$ we have that $g\left(  1,\rho\right)  =0$ and by
Lemma~\ref{lem:converge-to-I},%
\begin{equation}
g^{\prime}\left(  1,\rho\right)  =R-I\left(  A;B\right)  _{\omega
}>0.\label{eq:derivative-ge-than-zero}%
\end{equation}
Now, suppose for a contradiction that (\ref{eq:fun-ge-than-zero}) does not
hold, or equivalently, that
\begin{equation}
\forall\beta>1,\,\exists\alpha\in\left(  1,\beta\right)  ,\qquad\min_{\rho
}g(\alpha,\rho)\leq0.\label{eq:fun-le-than-zero}%
\end{equation}
Then there exists a real sequence $\{\alpha_{n}\}$ and a sequence $\{\rho
_{n}\}$ of states in $\mathcal{S}(\mathcal{H}_{A})$ such that
\begin{equation}
\alpha_{n}\in\left(  1,1+\frac{1}{n}\right)  \quad\text{and}\quad g\left(
\alpha_{n},\rho_{n}\right)  \leq0.
\end{equation}
Since $\mathcal{S}(\mathcal{H}_{A})$ is a compact space, there exists a
subsequence of $\rho_{n}$ that converges to some state $\rho_{\infty}%
\in\mathcal{S}(\mathcal{H}_{A})$ as $n\rightarrow\infty$. Relabeling the
subsequence to be $\{\rho_{n}\}$, without loss of generality we can assume
that $\rho_{n}\rightarrow\rho_{\infty}$ as $n\rightarrow\infty$. By the mean
value theorem, we have that
\begin{equation}
\forall n,\,\exists\gamma_{n}\in\left(  1,\alpha_{n}\right)  ,\qquad
g^{\prime}\left(  \gamma_{n},\rho_{n}\right)  =\frac{g\left(  \alpha_{n}%
,\rho_{n}\right)  -g\left(  1,\rho_{n}\right)  }{\alpha_{n}-1}\leq
0.\label{eq:mean-value-thm}%
\end{equation}
Since $g^{\prime}\left(  \alpha,\rho\right)  $ is a continuous function of
$\left(  \alpha,\rho\right)  $, (\ref{eq:mean-value-thm}) yields
\begin{equation}
g^{\prime}\left(  1,\rho_{\infty}\right)  \leq0,
\end{equation}
which contradicts (\ref{eq:derivative-ge-than-zero}).
\end{proof}

Now, Lemma \ref{lem: rate-capacity-gt-zero} and (\ref{eq:new-succ-prob}) yield
our main theorem:

\begin{theorem}
[Strong converse for EA capacity]For any sequence of entanglement-assisted
codes for a channel $\mathcal{N}$ and with rate $R > I\left(  \mathcal{N}%
\right)  $, the success probability decays exponentially to zero as $n
\rightarrow\infty$.
\end{theorem}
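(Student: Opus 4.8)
The plan is to assemble the strong converse from two ingredients already in hand: the exponential success-probability bound (\ref{eq:new-succ-prob}), which is valid for \emph{every} fixed $\alpha\in(1,\infty)$ and every blocklength $n\geq 1$, and the positivity guaranteed by Lemma~\ref{lem: rate-capacity-gt-zero}. The characteristically Arimoto-style point is that one never needs the bound at the optimal R\'enyi parameter; it suffices to exhibit a \emph{single} admissible $\alpha$, fixed once and for all, at which the per-copy exponent is strictly positive.

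First I would recall that the additivity result of Lemma~\ref{lem:subadd-EA}, together with the inductive argument and the comparison (\ref{eq:ineq-d-tilde-alpha-d-alpha}), is precisely what turns the a priori $n$-dependent quantity $\frac{1}{n}\widetilde{I}_{\alpha}(\mathcal{N}^{\otimes n})$ in (\ref{eq:improved-bound-succ-prob}) into the single-letter quantity ${I}_{\alpha}(\mathcal{N})$ appearing in (\ref{eq:new-succ-prob}). Consequently the exponent in (\ref{eq:new-succ-prob}) genuinely factors as $n$ times a per-copy rate that depends only on $\mathcal{N}$, $R$, and $\alpha$, with no hidden growth in $n$.

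Next I would invoke Lemma~\ref{lem: rate-capacity-gt-zero}. Since $R>{I}(\mathcal{N})$ by hypothesis, the lemma produces a $\beta>1$ such that the per-copy exponent $\left(\frac{\alpha-1}{\alpha}\right)\left(R-{I}_{\alpha}(\mathcal{N})\right)$ is strictly positive for all $\alpha\in(1,\beta)$. I would then fix any $\alpha_{0}\in(1,\beta)$ and set
\[
c\;\equiv\;\left(\frac{\alpha_{0}-1}{\alpha_{0}}\right)\left(R-{I}_{\alpha_{0}}(\mathcal{N})\right)\;>\;0,
\]
a positive constant independent of $n$ because $\alpha_{0}$ is held fixed. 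Substituting $\alpha=\alpha_{0}$ into (\ref{eq:new-succ-prob}) yields $p_{\text{succ}}\leq 2^{-nc}$ for every code in the sequence, and hence $p_{\text{succ}}\to 0$ exponentially fast as $n\to\infty$, which is the assertion of the theorem.

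I do not expect a substantive obstacle here, since the heavy lifting (the meta-converse of Proposition~\ref{prop:17MW}, the additivity via completely bounded norms, and the derivative computation in Lemma~\ref{lem:converge-to-I}) has already been carried out. The only points deserving care are conceptual rather than computational: the bound must hold for each individual code regardless of its internal structure, which it does because Proposition~\ref{prop:gen-div-converse} and its R\'enyi specialization are code-independent after the optimization over $\rho_{A^{\prime n}}$; and the chosen parameter $\alpha_{0}$ must not drift with $n$. The latter is exactly why Lemma~\ref{lem: rate-capacity-gt-zero} is stated as the existence of an entire interval $(1,\beta)$ of good parameters rather than as a limiting statement at $\alpha=1$, thereby decoupling the choice of R\'enyi parameter from the blocklength. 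Optimizing the resulting fixed-$\alpha$ bound over $\alpha$ then recovers the exponent $\sup_{\alpha>1}\left(\frac{\alpha-1}{\alpha}\right)\left(R-\widetilde{I}_{\alpha}(\mathcal{N})\right)$ of (\ref{eq:improved-bound-succ-prob}) as the claimed bound on the strong converse exponent.
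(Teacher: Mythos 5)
Your proposal is correct and takes essentially the same route as the paper: the paper's entire proof of this theorem is the one-line observation that Lemma~\ref{lem: rate-capacity-gt-zero} together with the single-letter bound (\ref{eq:new-succ-prob}) yields a fixed $\alpha_{0}\in(1,\beta)$ at which the exponent $c=\left(\frac{\alpha_{0}-1}{\alpha_{0}}\right)\left(R-I_{\alpha_{0}}(\mathcal{N})\right)$ is a strictly positive constant independent of $n$, whence $p_{\text{succ}}\leq 2^{-nc}$. Your spelled-out version, including the remark that the chosen R\'enyi parameter must not drift with the blocklength, is exactly the intended argument.
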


\section{Conclusion}

This paper provides an alternate path for establishing a strong converse
theorem for the entanglement-assisted capacity of any quantum channel. The
strong converse theorem, along with the coding theorem from
\cite{PhysRevLett.83.3081,Bennett2002,Hol01a}, refines our understanding of
the entanglement-assisted capacity as a sharp dividing line between what rates
of communication are possible or impossible. The approach taken here is to
exploit the entanglement-assisted \textquotedblleft
meta-converse\textquotedblright\ from \cite{MW12}, several properties of the
sandwiched R\'{e}nyi relative entropy (especially its relation to $\alpha
$-norms) \cite{WWY13,MDSFT13,B13monotone,FL13}, and the main result from
\cite{DJKR06}. The appeal of the present paper is that it demonstrates the
extent to which the powerful Arimoto approach \cite{A73,PV10,SW12} can be
helpful in establishing strong converses, and furthermore, we provide an
operational relevance for the main result in \cite{DJKR06}. The present paper
also adds to the existing body of evidence \cite{WWY13,MDSFT13,MO13,DL13}%
\ that the sandwiched R\'{e}nyi relative entropy is the correct quantum
generalization of the classical concept for all $\alpha>1$. Finally, some of
the ideas in this work might be helpful in solving the open question from
\cite{MW13}\ (i.e., establishing a strong converse theorem for the quantum
capacity of degradable quantum channels).

\textit{Acknowledgements}---We are grateful to Naresh Sharma and Andreas
Winter for insightful discussions and to Mil\'an Mosonyi for some helpful
comments on our paper. We also thank the Department of Physics and Astronomy
and the Center for Computation and Technology at Louisiana State University
for providing startup funds to support this research.

\bibliographystyle{plain}
\bibliography{Ref}

\end{document}